\documentclass[11pt]{article}
\usepackage[T1]{fontenc}
\usepackage{amsmath,amssymb,amsthm}
\usepackage{geometry}
\usepackage{hyperref}
\usepackage{listings}
\usepackage{xcolor}
\usepackage{booktabs}
\usepackage{array}
\usepackage[protrusion=true,expansion=false]{microtype}
\usepackage{enumitem}

\theoremstyle{plain}
\newtheorem{theorem}{Theorem}[section]

\newtheorem{corollary}[theorem]{Corollary}
\newtheorem{proposition}[theorem]{Proposition}

\theoremstyle{definition}
\newtheorem{definition}[theorem]{Definition}
\newtheorem{example}[theorem]{Example}
\newtheorem{remark}[theorem]{Remark}

\title{\textbf{Quotient Tree Arithmetic: Deferred Division,\\
Symbolic Stability, and a General Framework for Exact and Near-Exact Computation in\\
Scientific Computing and Machine Learning}}

\author{Gregory Magarshak\thanks{IE University; Safebots Inc.; \texttt{gmagarshak@faculty.ienyc.edu}}\\
\small IE University \quad Safebots Inc.\\
\small \texttt{gmagarshak@faculty.ienyc.edu}}

\date{June 2026}

\begin{document}
\maketitle

\begin{abstract}
We introduce \emph{rational pair arithmetic} (RPA): representing
real-valued quantities as ordered pairs $(n, d)$ of IEEE~754
double-precision floats used purely as integers, where the value is
$n/d$, evaluated lazily.
Three observations drive the system.
First, IEEE~754 doubles represent all integers below $2^{53}\approx 9\times 10^{15}$
exactly, so rational arithmetic on denominators drawn from practical
domains (currency, measurement, probability) is exact without any
multi-precision library.
Second, a \emph{cleanup} operation---dividing both $n$ and $d$ by
$\gcd(n,d)$---keeps representations compact after each step, analogous
to fraction simplification; we argue this should become a vectorisable
primitive in GPU/CPU instruction sets.
Third, rational pairs may be \emph{nested}: the numerator or denominator
may itself be a rational pair, forming a symbolic expression tree.
We prove that the tree depth grows by at most~1 per arithmetic
operation (no combinatorial explosion), that any tree of depth~$k$
can be flattened to a flat pair in $O(k)$ steps, and that
common-factor cancellation across tree nodes prevents intermediate
blowup.
The combined system provides: exact comparison (eliminating
$\varepsilon$-comparison heuristics), well-behaved arithmetic at
zero (no spurious underflow), hardware-independent deterministic
results, and---most strikingly---the structural impossibility of
vanishing gradients in deep neural networks.
We describe how the type system of the programming language~U
realises these ideas at the compiler level, connect the infinity
and infinitesimal limits to hyperreal semantics, and propose a new
GPU/CPU instruction (\textsc{RatCleanup}) whose adoption would make
RPA competitive with single-precision float in throughput.
\end{abstract}

\tableofcontents
\newpage

\section{Introduction}

Floating-point arithmetic has driven scientific computing for seven
decades.  Its success rests on three pillars: hardware support
(dedicated FPUs), standardisation (IEEE~754), and extensive
tooling.  Yet its limitations are equally well-known:
\begin{itemize}
\item $0.1 + 0.2 \neq 0.3$ in double precision (the canonical
  comparison failure);
\item tiny values silently round to zero (underflow, causing the
  vanishing gradient problem in deep learning);
\item results differ across hardware, CUDA versions, and operand order
  (non-determinism);
\item comparing two computed floats for equality is generally
  incorrect.
\end{itemize}

The standard remedies---arbitrary-precision decimal libraries,
interval arithmetic, symbolic algebra systems---carry either
severe performance costs or restricted expressiveness.
Quantisation (INT8, FP8) trades precision for speed but introduces
systematic error and makes comparison meaningless.

This paper proposes a different path.  The key insight is mundane
but surprisingly underexploited:
\begin{quote}
IEEE~754 double precision represents every integer in
$(-2^{53}, 2^{53})$ \emph{exactly}.
\end{quote}
This range covers $9\times 10^{15}$---far larger than any denominator
appearing in financial, scientific, or machine-learning applications.
Storing a rational number as a \emph{pair} of floats $(n, d)$ used
purely as integers, and evaluating $n/d$ only when a decimal result
is explicitly needed, turns most rational arithmetic into exact integer
arithmetic with IEEE hardware at no extra cost.

We formalise this observation, derive exactness conditions
(Section~\ref{sec:rpa}), introduce the cleanup operation
(Section~\ref{sec:cleanup}), develop the nested rational pair algebra
(Section~\ref{sec:nested}), and apply it to machine learning stability
(Section~\ref{sec:ml}).

\subsection{Contributions}

This paper makes the following contributions:
\begin{enumerate}
\item A formal definition of rational pair arithmetic (RPA) exploiting
  the IEEE~754 integer exactness window, with exactness conditions
  (\S\ref{sec:rpa}).
\item The cleanup operation as a proposed hardware primitive
  (\textsc{RatCleanup}), with correctness proof and throughput
  analysis (\S\ref{sec:cleanup}).
\item A formal treatment of nested rational pairs with bounded depth
  growth, flattening, and cross-cancellation theorems (\S\ref{sec:nested}).
\item A stability theorem for deferred evaluation showing $O(1)$ vs
  $O(m)$ ULP error (\S\ref{sec:nested}).
\item Theorem~\ref{thm:no-vanish}: structural impossibility of vanishing
  gradients under RPA with cleanup (\S\ref{sec:ml}).
\item Cross-hardware determinism under fixed computation order
  (\S\ref{sec:ml}).
\end{enumerate}

\subsection{Related work}

Exact rational arithmetic appears in computer algebra systems (GMP,
Mathematica, SageMath).
GNU MP's \texttt{mpq\_t} type~\cite{gmp} is rational pair arithmetic with
arbitrary-precision (bignum) numerator and denominator: exactly our
representation, without the IEEE integer exactness constraint.
The key distinction of RPA is exploiting the IEEE window
($|n|, |d| < 2^{53}$) to use hardware arithmetic directly, yielding
a fixed-width, hardware-accelerated rational type---at the cost of
exactness guarantees only within that window.
Knuth~\cite{hpfrac} gives the foundational algorithm for rational
arithmetic.  Multi-precision integer arithmetic achieves exactness but at
10--100$\times$ performance cost.
Floating-point rational libraries~\cite{hpfrac} use fixed-width
integer pairs but do not address the IEEE integer exactness window.
Symbolic computation~\cite{mca} concerns algebraic expressions rather
than numeric stability.  Posit arithmetic~\cite{posit} addresses
dynamic range but not exact comparison or symbolic nesting.
Our contribution is a unified framework that exploits the IEEE integer
window, introduces cleanup as a hardware primitive, and shows the
nesting extension yields structural stability guarantees for deep
networks.

\section{IEEE~754 and the Integer Exactness Window}
\label{sec:ieee}

\begin{definition}[IEEE~754 double precision]
A double-precision float has a 52-bit explicit significand plus one
implicit leading bit, giving 53 bits of precision.  The representable
range is approximately $[-1.8\times 10^{308}, 1.8\times 10^{308}]$
with a \emph{unit in the last place} (ULP) of $2^{e-52}$ for exponent
$e$.
\end{definition}

\begin{theorem}[Integer Exactness Window]
\label{thm:iew}
Every integer $n$ with $|n| < 2^{53}$ is exactly representable as an
IEEE~754 double.  Moreover, for integers $a, b$ with
$|a|,|b| \leq 2^{26}$, the product $ab$ satisfies $|ab| \leq 2^{52} <
2^{53}$ and is therefore exactly representable.
\end{theorem}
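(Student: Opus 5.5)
The plan is to argue directly from the encoding of normal doubles. A normal IEEE~754 double has the form $(-1)^s \cdot m \cdot 2^{e-52}$. Here $m$ is an integer significand with $2^{52} \le m < 2^{53}$: the implicit leading bit followed by the 52 explicit bits. The exponent $e$ ranges over $[-1022, 1023]$. Zero has its own encoding. The first claim is therefore a statement about which integers admit such a representation, and I will handle it by cases on the bit-length of $|n|$.

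For the first claim, take $0 < |n| < 2^{53}$; the sign is handled by $s$ and $n=0$ by the zero encoding. Let $k = \lfloor \log_2 |n| \rfloor$, so $0 \le k \le 52$ and $2^k \le |n| < 2^{k+1}$. Set $e = k$ and $m = |n| \cdot 2^{52-k}$.
\begin{itemize}
\item Since $52-k \ge 0$, the number $m$ is an integer.
\item Multiplying $2^k \le |n| < 2^{k+1}$ by $2^{52-k}$ gives $2^{52} \le m < 2^{53}$.
\item Then $m \cdot 2^{e-52} = |n|$ exactly.
\item The exponent $e = k \in [0,52]$ lies well inside the normal range, so no overflow or subnormal issue arises.
\end{itemize}
This representation involves no rounding. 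The one point needing care is the requirement $k \le 52$, which is exactly what forces $m$ to be an integer; this is where the bound $2^{53}$ enters. I would also remark that $2^{53}$ itself is representable, but $2^{53}+1$ is not. The stated strict inequality is therefore the natural window in which \emph{every} integer is representable, although the theorem itself only requires the sufficiency direction.

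For the second claim, the bound is elementary: $|ab| = |a||b| \le 2^{26}\cdot 2^{26} = 2^{52} < 2^{53}$. Since $ab$ is an integer, the first claim applies and $ab$ is representable. To connect this to actual hardware arithmetic, I would add one more observation. IEEE~754 multiplication returns the correctly rounded value of the exact product. When the exact product is itself representable, rounding is the identity, so the computed \texttt{a*b} equals $ab$ exactly. The same reasoning covers sums and differences of integers whose exact result stays inside the window.

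I expect no real obstacle. The only delicate step is the precise parametrization of the significand and exponent, so that the integrality of $m$ is transparent. I would state the encoding convention explicitly at the start of the proof to avoid off-by-one confusion between 52 and 53 bits.
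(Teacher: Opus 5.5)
Your proof is correct and follows the same route as the paper's: the 53-bit significand (implicit bit plus 52 stored bits) holds any integer below $2^{53}$ once a suitable exponent is chosen, and the product claim reduces to the first part via $|ab|\le 2^{52}<2^{53}$. The difference is only one of rigour: your explicit choice $e=\lfloor\log_2|n|\rfloor$, $m=|n|\,2^{52-e}$ spells out the normalisation that the paper's ``exponent $e=0$ (or the appropriate $e$) and ULP $=1$'' leaves implicit.
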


\begin{proof}
The significand of a double covers $2^{53}$ distinct values.  For
$|n| < 2^{53}$ the integer $n$ lies within the range representable
with exponent $e = 0$ (or the appropriate $e$) and ULP = 1, hence is
stored exactly.  For the product: $|ab| \leq (2^{26})^2 = 2^{52} < 2^{53}$,
which lies within the integer exactness window.
\end{proof}

\begin{corollary}
The integer exactness window contains: all 32-bit signed integers, all
financial amounts in hundredths (cents) up to $\$90\,000\,000\,000$,
denominators up to $10^{15}$, and GPS coordinates in nanodegrees.
\end{corollary}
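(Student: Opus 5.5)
The plan is to reduce each of the four claims to a single magnitude comparison against $2^{53}$ and then invoke the first part of Theorem~\ref{thm:iew}, which already gives exact representability of every integer $n$ with $|n| < 2^{53}$. The only quantitative fact needed is a usable lower bound on the threshold: $2^{53} = 9\,007\,199\,254\,740\,992 > 9\times 10^{15}$. I would verify this once, for instance from $2^{10} = 1024 > 10^3$, which gives $2^{50} > 10^{15}$ and hence $2^{53} = 8\cdot 2^{50} > 8\times 10^{15}$; quoting the exact decimal value of $2^{53}$ then tightens this to the stated $9\times 10^{15}$.

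The four cases then go in order.
\begin{enumerate}
\item \emph{32-bit signed integers.} These lie in $[-2^{31}, 2^{31}-1]$, so $|n| \le 2^{31} < 2^{53}$.
\item \emph{Cents.} Amounts up to $\$90\,000\,000\,000$ correspond to integers of magnitude at most $9\times 10^{10}\cdot 10^2 = 9\times 10^{12}$ cents. This is roughly three orders of magnitude below $2^{53}$; the stated dollar bound is in fact very conservative.
\item \emph{Denominators up to $10^{15}$.} Here $10^{15} < 2^{50} < 2^{53}$ by the estimate above.
\item \emph{GPS coordinates.} Latitude lies in $[-90,90]$ and longitude in $[-180,180]$ degrees, so in nanodegrees every coordinate is an integer of magnitude at most $180\times 10^9 = 1.8\times 10^{11} < 2^{53}$.
\end{enumerate}
In each case the quantities in question are integers by construction, namely counts of cents or of nanodegrees, so membership in the window follows immediately from Theorem~\ref{thm:iew}.

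I expect no real obstacle. The only point needing care is the interpretation of the phrase ``denominators up to $10^{15}$'': it should be read as integer denominators $d$ with $1 \le |d| \le 10^{15}$, for which the bound in case~3 suffices.

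I would also add a remark that the corollary concerns representability of individual values only. Products of two such values, for instance two denominators near $10^{15}$, need not stay in the window. This is exactly why the second half of Theorem~\ref{thm:iew} restricts factors to $|a|,|b| \le 2^{26}$, and that restriction is not needed for the present statement.
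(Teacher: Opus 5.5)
Your proposal is correct and follows the same route as the paper. The paper states this corollary without a separate proof, as an immediate consequence of the first part of Theorem~\ref{thm:iew}, and your four checks that each quantity is an integer of magnitude below $2^{53}$ are exactly that verification (the bound $2^{53} > 8\times 10^{15}$ already suffices, since the largest quantity involved is $10^{15}$); your side remarks that the \$90 billion figure is conservative by roughly a factor of $10^3$, and that the statement concerns single values rather than products, are also accurate.
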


This window is the foundation of rational pair arithmetic.  As long as
numerators and denominators remain inside $(-2^{53}, 2^{53})$,
all arithmetic is exact with standard IEEE hardware.

\section{Rational Pair Arithmetic}
\label{sec:rpa}

\begin{definition}[Rational Pair]
A \emph{rational pair} is a tuple $\mathbf{q} = (n, d)$ where
$n, d$ are IEEE doubles used as integers ($d \neq 0$), representing
the rational number $n/d$.  We write $\mathbb{Q}_{\mathrm{fp}}$ for
the set of all rational pairs.
\end{definition}

\begin{definition}[Rational Pair Arithmetic]
The four arithmetic operations on rational pairs are:
\begin{align*}
(a,b) + (c,d) &= (a d + b c,\; b d)\\
(a,b) - (c,d) &= (a d - b c,\; b d)\\
(a,b) \times (c,d) &= (a c,\; b d)\\
(a,b) \div (c,d) &= (a d,\; b c)
\end{align*}
Comparison: $(a,b) < (c,d)$ iff $a d < b c$ (assuming $b, d > 0$).
\end{definition}

\begin{theorem}[Exactness of Addition and Multiplication]
\label{thm:exact-arith}
Let $(a,b), (c,d) \in \mathbb{Q}_{\mathrm{fp}}$ with
$\max(|a|,|b|,|c|,|d|) \leq M$.
\begin{enumerate}[label=(\alph*)]
\item Multiplication is exact if $M^2 < 2^{53}$, i.e., $M \leq 2^{26} \approx 6.7\times 10^7$.
\item Addition is exact if $M^2 + M^2 < 2^{53}$, i.e., $M \leq 2^{25.9} \approx 6.3\times 10^7$.
\item Comparison is exact if $M^2 < 2^{53}$.
\end{enumerate}
\end{theorem}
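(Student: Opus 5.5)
The plan is to reduce each part to Theorem~\ref{thm:iew}, together with one standard IEEE~754 fact: if the exact result of a correctly rounded operation $\circ\in\{+,-,\times\}$ on two representable operands is itself representable, the computed result equals the exact result. Both operands in every case are integers of magnitude at most $M<2^{53}$, so they are stored exactly. For each operation it therefore suffices to bound the magnitude of every intermediate integer the formula produces and check that it lies strictly inside the window $(-2^{53},2^{53})$.

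For (a), multiplication produces the two products $ac$ and $bd$, each of magnitude at most $M^2$. Under $M^2<2^{53}$ each is an integer in the window, hence representable, and the rounded product is exact. I would remark that $M\le 2^{26}$ gives $M^2\le 2^{52}$, which is exactly the second clause of Theorem~\ref{thm:iew}. For (c), comparison evaluates $ad$ and $bc$, each bounded by $M^2$ and so exact by the same argument. The floating-point comparison of two exactly stored values is exact, since IEEE comparison never rounds. Hence $ad<bc$ is decided correctly, and with $b,d>0$ this is equivalent to $a/b<c/d$.

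For (b), addition first forms $ad$, $bc$ and $bd$. Each is bounded by $M^2$, and $M^2<2M^2<2^{53}$, so all three are exact by (a). The remaining step is the sum (or difference) $ad\pm bc$. It is an integer of magnitude at most $|ad|+|bc|\le 2M^2$, so the hypothesis $2M^2<2^{53}$ places it in the window. The rounded addition of two exact representable integers whose true sum is representable is exact, so the numerator is exact and the denominator $bd$ was already exact. The same bound covers subtraction.

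I expect the main obstacle to be the numerical thresholds rather than the logic. The condition $2M^2<2^{53}$ means $M<2^{26}$, and an integer $M$ satisfies this exactly when $M\le 2^{26}-1$. So the stated ``$M\le 2^{25.9}$'' is only a loose restatement, and I would state the precise integer condition instead. In (a) the printed strict inequality $M^2<2^{53}$ is the operative hypothesis and is weaker than $M\le 2^{26}$, so the proof goes through either way. One more subtlety needs a sentence: the exactness window is stated for $|n|<2^{53}$, while $2^{53}$ itself is also representable. No bound here reaches $2^{53}$, so this edge case never arises.
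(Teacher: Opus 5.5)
Your proof is correct and follows essentially the paper's route. You bound every integer the formula produces by $M^2$ or $2M^2$, then invoke Theorem~\ref{thm:iew} together with the fact that a correctly rounded operation whose exact result is representable returns that result. You are slightly more complete than the paper: you also check the denominator $bd$ and the intermediate products $ad$, $bc$ in the addition, and you treat the thresholds as sufficient conditions rather than the paper's ``exact iff $M^2<2^{53}$''.

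The one substantive difference is in (c). The paper reduces comparison to evaluating $ad-bc$, ``analogous to subtraction.'' Under the stated hypothesis $M^2<2^{53}$, that difference can reach magnitude close to $2M^2$ and fall outside the window; for instance it can be an odd integer in $(2^{53},2^{54})$, which is not representable. So the subtraction itself need not be exact. The paper's route therefore needs either the stronger addition hypothesis $2M^2<2^{53}$, or an extra argument that rounding preserves the sign of a nonzero integer difference. You instead compare the two exactly stored products $ad$ and $bc$ directly, which matches the paper's definition of $<$. That needs only $M^2<2^{53}$, so your argument proves (c) as stated.

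Your threshold remarks are also right and agree with the paper's own ``giving $M<2^{26}$.'' The condition $2M^2<2^{53}$ is equivalent to $M<2^{26}$, so ``$M\le 2^{25.9}$'' is merely sufficient. Likewise $M\le 2^{26}$ is stronger than $M^2<2^{53}$, not equivalent to it.
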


\begin{proof}
For multiplication: the result numerator $ac$ satisfies $|ac| \leq M^2$.
By Theorem~\ref{thm:iew}, this is exact iff $M^2 < 2^{53}$.
For addition: the numerator $ad + bc$ satisfies $|ad + bc| \leq 2M^2$.
Exactness requires $2M^2 < 2^{53}$, giving $M < 2^{26}$.
Comparison reduces to evaluating $ad - bc$, analogous to subtraction.
\end{proof}

\begin{definition}[$Q_n$ types]
For integer $n \geq 0$, define $Q_n$ as the set of rational pairs
$(a, d)$ with $|d| \leq 10^n$.  We call $n$ the \emph{scale exponent}.
Special cases: $Q_0$ (integers: $d = 1$), $Q_2$ (two decimal places:
$d \leq 100$, suitable for currency), $Q_5$ (five decimal places:
$d \leq 10^5$).
\end{definition}

\begin{theorem}[Exactness Range for $Q_n$]
\label{thm:qn-exact}
In $Q_n$ arithmetic, all four operations are exact provided
$|a|, |c| \leq K$ where $K \cdot 10^n < 2^{53}/2$.
For $n = 2$: $K < 4.5 \times 10^{13}$ (covers amounts up to
$\$450\,000\,000\,000$).
For $n = 5$: $K < 4.5 \times 10^{10}$.
For $n = 7$: $K < 4.5 \times 10^8$.
\end{theorem}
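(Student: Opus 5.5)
The plan is to reduce every operation to a bound on the integers it forms and then apply Theorem~\ref{thm:iew}. Write the operands as $(a,b),(c,d)\in Q_n$, so $|b|,|d|\le 10^n$ and $|a|,|c|\le K$. The intermediate integers fall into three kinds. First, the \emph{cross products} $ad$ and $bc$ appear in addition, subtraction, comparison and division. Second, the \emph{denominator product} $bd$ appears in addition, subtraction and multiplication. Third, the \emph{numerator product} $ac$ appears only in multiplication. I will bound each kind separately and check it against $2^{53}$.

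For the cross products, $|ad|\le K\cdot 10^n$ and likewise for $bc$. Each is therefore below $2^{53}/2$ by the hypothesis $K\cdot 10^n<2^{53}/2$. For addition and subtraction the numerator satisfies
\[
|ad\pm bc|\le 2K\cdot 10^n<2^{53},
\]
so it lies in the integer exactness window. This is exactly why the hypothesis carries the factor $1/2$, in the same way as part (b) of Theorem~\ref{thm:exact-arith}. Comparison only evaluates the sign of $ad-bc$, so the same bound covers it. Division produces $(ad,bc)$, and each entry is bounded by $K\cdot 10^n$. The numerical cases then follow by computing $2^{53}/2\approx 4.5\times 10^{15}$ and dividing by $10^2$, $10^5$ and $10^7$.

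For the denominator product, $|bd|\le 10^{2n}$. This is below $2^{53}\approx 9\times 10^{15}$ exactly when $n\le 7$. The listed cases $n=2,5,7$ all satisfy this, and I will state it as a side condition: denominators are exact for $n\le 7$.

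The main obstacle is the numerator product $ac$ in multiplication. Here the hypothesis gives only $|ac|\le K^2$. For the quoted values of $K$ this can exceed $2^{53}$; for example $n=2$ allows $K^2\approx 10^{27}$. So the statement cannot hold literally for multiplication without an extra assumption.

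I would try two repairs. The first is to read ``all four operations'' as saying that every product of a numerator with a denominator is exact. The multiplication bound then reduces to Theorem~\ref{thm:exact-arith}(a), which requires $\min(|a|,|c|)\cdot K<2^{53}$. The second is to interpret $Q_n$ multiplication as rescaling back to denominator $10^n$ by treating one factor as a scale-$n$ multiplier, so the product has the form (numerator)$\times$(at most $10^n$). The cross-product bound then applies verbatim.

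I would present the proof under this second reading and flag explicitly that an unrestricted $ac$ requires $K^2<2^{53}$, as in Theorem~\ref{thm:exact-arith}.
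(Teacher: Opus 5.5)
The paper gives no proof of Theorem~\ref{thm:qn-exact}. It is stated and followed immediately by the $0.1+0.2$ example. The only comparable argument is the proof of Theorem~\ref{thm:exact-arith}, which uses your strategy: bound each integer formed, then invoke Theorem~\ref{thm:iew}.

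\textbf{What you prove correctly.} Your treatment of addition, subtraction, division and comparison is right. Since $|ad|,|bc|\le K\cdot 10^n<2^{52}$, the cross products and $ad\pm bc$ are exact. Also $|bd|\le 10^{2n}<2^{53}$ for $n\le 7$.

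\textbf{Multiplication.} Your diagnosis is also right, and it means the statement as written is false, not merely unproved. Make the refutation airtight with an explicit witness rather than the estimate $K^2\approx 10^{27}$, because an integer beyond $2^{53}$ can still be a double (for instance $2^{60}$ is). Take $a=c=2^{27}+1\approx 1.34\times 10^8$ and $b=d=1$, which satisfy the hypothesis for $n=2,5,7$ alike. Then $ac=2^{54}+2^{28}+1$ is odd with $55$ significant bits, and the IEEE product under round-to-nearest returns $2^{54}+2^{28}$.

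\textbf{The side condition $n\le 7$.} This is necessary, not cosmetic. For $n=8$, take $b=10^8-1$ and $d=10^8-3$ (with, say, $a=c=1$). Then $bd=10^{16}-4\cdot 10^8+3$ is odd and exceeds $2^{53}$, so even addition is inexact. The statement is therefore true only for $n\le 7$, and only for $+$, $-$, $\div$ and comparison.

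\textbf{Your two repairs.} The second is not a reading of the statement but an added hypothesis: one factor's numerator is at most $10^n$ in magnitude. Moreover, actually rescaling the product back to denominator $10^n$ would mean dividing by $10^n$, which is a rounding step, so it cannot be part of an exact operation. The first repair's condition $\min(|a|,|c|)\cdot K<2^{53}$ is a valid sufficient condition. However, it follows directly from Theorem~\ref{thm:iew}, not from Theorem~\ref{thm:exact-arith}(a), which assumes $M^2<2^{53}$ with $M$ the maximum over all four entries.

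\textbf{A cleaner corrected statement.} Keep the hypothesis $K\cdot 10^n<2^{52}$ with $n\le 7$ for $+$, $-$, $\div$ and comparison. For multiplication, add the explicit hypothesis $|ac|<2^{53}$, for instance via $K<2^{26.5}\approx 9.49\times 10^7$. This bound is far below the quoted $K$ for $n=2,5$, and roughly a factor of five below it for $n=7$.
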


\begin{example}[The 0.1 + 0.2 problem]
In IEEE double: $0.1 + 0.2 = 0.30000000000000004$.
In $Q_1$ arithmetic: $(1, 10) + (2, 10) = (3, 10)$, and
$(3, 10) \overset{?}{=} (3, 10)$ evaluates to \textsf{true}---exactly.
No $\varepsilon$-tolerance needed.
\end{example}

\begin{example}[Currency arithmetic]
$\$(19.99) + \$(7.50) \times 1.0875$ (price plus tax at 8.75\%):
$(1999, 100) + (750, 100) \times (875, 10000)$
$= (1999, 100) + (656250, 1000000)$
$= (1999 \times 10000 + 656250 \times 100, 1000000)$
Wait, simplify: $(1999,100)\times\frac{10875}{10000}$
$= \frac{1999 \times 10875}{100 \times 10000} = \frac{21739125}{1000000}$
$= Q(21739125, 1000000) \to \$21.74$ (to nearest cent).
All arithmetic exact; rounding occurs only at display.
\end{example}

\subsection{Replacing decimal types}

Decimal floating-point (IEEE~754-2008 decimal formats, Python's
\texttt{Decimal}, Java's \texttt{BigDecimal}) exists to solve the
$0.1+0.2$ problem.  Rational pair arithmetic subsumes it:
a $k$-decimal-place value is simply $Q_k$ with denominator $10^k$.
But $Q$ is strictly more expressive---$1/3$ is exactly $(1, 3)$ with
no truncation, and arithmetic preserves exactness regardless of the
denominator, not only for power-of-ten denominators.


\section{Quotient Tree Arithmetic: The General Framework}
\label{sec:gpa}

The theorems in Sections~2--3 use integers within the IEEE exactness
window because that case has the strongest guarantees (exact GCD, exact
comparison, structural zero impossibility).  But examining the proofs
reveals that the restriction to integers is not load-bearing.
The bounded-depth theorem uses only multiplication; the cross-cancellation
theorem uses only the multiplicative group; the deferred-stability theorem
uses only the count of rounding steps.  The integer case is the
\emph{simplest} instance of a more general structure.

\begin{definition}[Quotient Tree]
\label{def:qt}
Let $D$ be a domain (any set closed under $+$, $\times$, with $0$ and $1$,
and supporting an equality test).
A \emph{quotient tree} over $D$ is defined inductively:
\begin{itemize}
\item A \emph{leaf}: any element $v \in D$.
\item A \emph{pair}: $(N, M)$ where $N$ and $M$ are quotient trees over $D$
  and the value of $M$ is not $0$.
\end{itemize}
The \emph{value} of a quotient tree is computed by $\mathrm{val}(\ell) = \ell$
for a leaf $\ell \in D$, and $\mathrm{val}((N,M)) = \mathrm{val}(N) / \mathrm{val}(M)$
for a pair.  Division is deferred until explicitly requested.
\end{definition}

The domain $D$ can be instantiated as:
\begin{center}\small
\begin{tabular}{lll}
\toprule
$D$ & Quotient trees over $D$ & Example values \\
\midrule
$\mathbb{Z}$ (integers) & Classical rational numbers & $\tfrac{1}{3}$, $\tfrac{22}{7}$ \\
$\mathbb{Z}$ (IEEE window) & Rational pair arithmetic (this paper) & $Q(1,3)$, $Q(22,7)$ \\
IEEE F64 & Float-pair / FP$^2$ arithmetic & $\tfrac{e^x}{\Sigma e^{x_j}}$ \\
$\mathbb{R}[x]$ (polynomials) & Rational functions & $\tfrac{x^2+1}{x-1}$ \\
Matrices & Rational matrix expressions & $A B^{-1}$ (deferred) \\
Dual numbers & Exact AD expressions & $(f, f')$ \\
\bottomrule
\end{tabular}
\end{center}

The algebraic structure is precisely the \emph{localisation} of $D$ at its
multiplicative group: pairs $(n, d)$ with $d \ne 0$, quotiented by the
equivalence $(n_1, d_1) \sim (n_2, d_2)$ iff $n_1 d_2 = n_2 d_1$.
For $D = \mathbb{Z}$ this recovers $\mathbb{Q}$; for $D = F_{64}$ it
gives deferred-division float arithmetic; for $D = \mathbb{R}[x]$ it
gives the field of rational functions.

\begin{theorem}[Bounded Depth Growth---General]
\label{thm:depth_general}
Under the four arithmetic operations on quotient trees, if both operands have
depth $\le k$, the result has depth $\le k+1$, for any domain $D$.
The tree size after $m$ operations is $O(m)$, not $O(2^m)$.
\end{theorem}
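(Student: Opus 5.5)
We argue by structural induction on the definition of quotient trees (Definition~\ref{def:qt}), first fixing how the four operations act on trees. For operands $X=(N_1,M_1)$ and $Y=(N_2,M_2)$ (a leaf $\ell$ is read as $(\ell,1)$ where needed), the operations are the tree-level analogues of the rational pair rules of Section~\ref{sec:rpa}:
\[
X\pm Y=(N_1M_2\pm M_1N_2,\;M_1M_2),\quad X\times Y=(N_1N_2,\;M_1M_2),\quad X\div Y=(N_1M_2,\;M_1N_2).
\]
Here a product or sum of two subtrees is a single domain-level combination node whose children are those subtrees. Nothing in these formulas uses properties of $D$ beyond closure under $+$ and $\times$. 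Hence every depth statement is purely syntactic, and this is why the theorem holds \emph{for any domain $D$}.

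\textbf{Depth bound.} Let $\mathrm{depth}(\ell)=0$ for a leaf and $\mathrm{depth}((N,M))=1+\max(\mathrm{depth}\,N,\mathrm{depth}\,M)$. If $\mathrm{depth}(X),\mathrm{depth}(Y)\le k$, every child $N_i,M_i$ has depth $\le k-1$. The new numerator and denominator are formed by $D$-operations on these children, and those operations are evaluated inside $D$ rather than creating new quotient nodes. Each of them is therefore a tree of depth at most $k-1$, or at most $k$ if we count the combination node. Wrapping them in one new pair node gives depth $\le k+1$. The leaf cases, where one or both operands have depth $0$, are checked the same way with $(\ell,1)$.

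\textbf{Size bound.} Here I would count, for a chain of $m$ operations, the nodes \emph{newly allocated} by each operation: a constant number, at most four combination nodes plus one pair node. The children $N_i,M_i$ are referenced rather than copied, so the result is a DAG with shared subtrees. Induction on $m$ then gives total size $\le c\,m+s_0$, where $s_0$ is the size of the initial leaves, i.e.\ $O(m)$.

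\textbf{Main obstacle.} The hard step is the size claim. Addition references $M_1$ and $M_2$ twice each. Measured as a fully unfolded tree, repeated self-addition $X\mapsto X+X$ would double the size at every step, which is exactly the $O(2^m)$ behaviour the theorem rules out. So the proof must fix the sharing (hash-consed DAG) semantics explicitly, and I would state that convention up front.

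The depth bound has a similar subtlety. It needs the convention that $D$-level sums and products do not add quotient depth, or else it must be read as ``depth $\le k+c$'' for a small constant $c$. In particular, $N_1M_2$ contains the denominators $M_2$ nested inside the numerator, and one must check that this does not create an extra pair level. I would handle this by defining depth as the nesting of quotient (pair) nodes only, which makes the $k+1$ bound immediate.
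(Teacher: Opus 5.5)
Your proof is correct under the conventions you fix. For the depth bound it is the paper's own argument made precise: the paper's proof refers back to Theorem~\ref{thm:bounded-depth} (structural induction on the cross-multiplication formulas) and observes that only closure of $D$ under $+$ and $\times$ is used, which is your domain-independence remark. You go beyond the paper by pinning down what it leaves implicit. The paper never says how a domain-level product or sum of non-leaf subtrees is represented, or whether it counts toward depth, and it offers no argument for the $O(m)$ size claim. The paper's route buys brevity by reusing the flat case; yours buys an actual proof of the size statement and an explicit list of the hypotheses (pair-only depth, shared subtrees) under which the theorem holds.

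Two refinements. First, the pair-only convention is forced rather than optional. With binary combination nodes, $N_1M_2\pm M_1N_2$ sits two combination levels above the children, so counting those levels gives $k+2$ for $\pm$; your aside ``at most $k$ if we count the combination node'' undercounts. Under your convention the bound is actually $\max(k,1)$, because your formulas never add a pair level above the operands' own. If you also want to cover the nested picture of Section~\ref{sec:nested}, where a deferred division is kept as the pair $(X,Y)$ as in $((a,b),(c,d))$, add that case. It allocates one pair node and raises depth by exactly one, so both bounds survive.

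Second, your $X\mapsto X+X$ example correctly shows the size claim is false for unfolded trees. With your formulas, sharing is needed even for a plain chain $X_{i+1}=X_i+Y_i$ with fresh $Y_i$: $M_i$ is duplicated, so the unfolded size grows as $\Theta(m^2)$. Ordinary reference sharing suffices for your count, so hash-consing is not required.
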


\begin{proof}
The proof of Theorem~\ref{thm:bounded-depth} uses only the definitions of
the four operations and structural induction on the pair
$(N,M) \mapsto (N \circ N', M \circ' M')$.  No property of $D$ beyond
closure under $+$ and $\times$ is used.  The argument is therefore valid
for any $D$.  \hfill$\square$
\end{proof}

\begin{theorem}[Cross-Cancellation---General]
\label{thm:cancel_general}
Let $v \ne 0$ be any element of $D$ (or any non-zero quotient tree over $D$).
If $v$ appears as a subtree of the numerator of a quotient tree $T$ and as
a subtree of the denominator, it may be cancelled: $T$ simplifies to a
quotient tree with value equal to $\mathrm{val}(T)$ but with $v$ absent from
both numerator and denominator subtrees.
\end{theorem}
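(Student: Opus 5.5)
The plan is to make the statement precise for multiplicative occurrences and then argue by structural induction on $T$. The phrase ``appears as a subtree'' only makes sense for cancellation when $v$ occurs as a \emph{multiplicative factor}: in a product node along a path of products (and pair-nodes taken in the correct polarity) from the root of the numerator, and likewise in the denominator. So I first define the \emph{factor set} of a quotient tree. For a leaf $\ell$ it is $\{\ell\}$. For a product node it is the union of the factor sets of its children. For a pair $(N,M)$, factors of $N$ count positively and factors of $M$ negatively. Occurrences under a sum node are excluded, since $(v+w)/v$ does not simplify to $(1+w)/1$. I will state the theorem for occurrences of $v$ in the factor sets of both the numerator and the denominator of $T=(N,M)$, and note that this is the reading under which the claim holds.

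The key tool is a removal operation $R_v$. It replaces one designated multiplicative occurrence of $v$ in a tree $S$ by the leaf $1$. I will prove by induction on the depth of the path to that occurrence that $\mathrm{val}(S)=\mathrm{val}(v)\cdot\mathrm{val}(R_v(S))$. There are two inductive cases. At a product node $S=A\times B$ with the occurrence in $A$, we have $\mathrm{val}(S)=\mathrm{val}(A)\mathrm{val}(B)=\mathrm{val}(v)\mathrm{val}(R_v(A))\mathrm{val}(B)$. At a pair node $(A,B)$ with the occurrence in the numerator $A$, the same computation goes through with division by $\mathrm{val}(B)$. This uses only associativity, commutativity and distributivity in the localisation of $D$ described before Theorem~\ref{thm:depth_general}; that is, we work in the fraction field of $D$, which requires $D$ to be an integral domain. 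I will make that hypothesis explicit.

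With the lemma in hand, the theorem follows. Write $\mathrm{val}(N)=\mathrm{val}(v)\,x$ and $\mathrm{val}(M)=\mathrm{val}(v)\,y$, where $x=\mathrm{val}(R_v(N))$ and $y=\mathrm{val}(R_v(M))$. Since $\mathrm{val}(v)\neq0$ and $\mathrm{val}(M)\neq0$, we get $y\neq0$, so $(R_v(N),R_v(M))$ is a legitimate pair. Its value is $x/y=\mathrm{val}(v)x/(\mathrm{val}(v)y)=\mathrm{val}(T)$, by the equivalence $(n_1,d_1)\sim(n_2,d_2)$ of the localisation. Finally, I would remove the residual $1$ leaves using $1\times A=A$, which does not increase depth, so the depth bound of Theorem~\ref{thm:depth_general} is preserved.

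I expect the main obstacle to be the nonzero-denominator bookkeeping rather than the value computation. Inner pair nodes on the path inside $M$ must remain valid after replacing $v$ by $1$. If $v$ sits inside the denominator of an inner pair, the removal \emph{multiplies} that inner value by $1/\mathrm{val}(v)$ rather than dividing it. The fix is to track polarity. I will define $R_v$ only on positive-polarity occurrences for the top-level numerator and denominator, or equivalently cancel a negative occurrence on one side against a negative occurrence on the other. I will then check that every modified inner denominator has value $\mathrm{val}(\text{old})/\mathrm{val}(v)\neq0$ or $\mathrm{val}(\text{old})\cdot\mathrm{val}(v)^{-1}\cdot\mathrm{val}(v)\neq 0$, both of which are nonzero in an integral domain.
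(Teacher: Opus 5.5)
Your argument is correct under the precise reading you adopt, and it takes a different and considerably more careful route than the paper. The paper's proof is a single appeal to the identity $\frac{a/v}{b/v}=\frac{a}{b}$, valid whenever $v\neq0$ is invertible. It then adds a case remark: for F64 it asserts that nonzero doubles form a multiplicative group, and for $\mathbb{Z}$ that the identity applies when $v$ divides both expressions. In effect it treats only the shape $T=((a,v),(b,v))$, with $v$ one level down and of the same polarity on both sides. It also never checks that the simplified tree keeps nonzero denominators.

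Your approach differs in three ways:
\begin{itemize}
\item Your removal operator $R_v$ and the lemma $\mathrm{val}(S)=\mathrm{val}(v)\,\mathrm{val}(R_v(S))$ handle occurrences at arbitrary depth.
\item You verify that every modified inner denominator stays nonzero.
\item You make explicit two hypotheses the literal statement silently needs. First, occurrences must be multiplicative and of equal polarity; otherwise $((v,a),(b,v))$, whose value is $v^2/(ab)$, already refutes the wording using pair nodes alone. Second, values must live in a field of fractions.
\end{itemize}

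The paper's route buys brevity and a nominal claim for F64. That claim rests on a false premise: rounded multiplication of doubles is neither associative nor free of underflow to $0$, so nonzero doubles are not a group. Your integral-domain hypothesis is the one that actually makes the identity hold. It covers F64 only if $\mathrm{val}$ is read as exact real arithmetic on the leaves.

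Two small repairs are needed.
\begin{itemize}
\item State and prove the lemma for both polarities at once, as $\mathrm{val}(S)=\mathrm{val}(v)^{\epsilon}\,\mathrm{val}(R_v(S))$ with $\epsilon=\pm1$. A positive occurrence can be reached through two denominator steps, and the pair-denominator case flips $\epsilon$. Your induction as written covers only product nodes and pair numerators.
\item The second expression in your final sentence should be $\mathrm{val}(\text{old})\cdot\mathrm{val}(v)$; as written it collapses to $\mathrm{val}(\text{old})$. Relatedly, removing a positive factor $v$ from an inner denominator multiplies that inner pair's value by $\mathrm{val}(v)$, not by $1/\mathrm{val}(v)$.
\end{itemize}
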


\begin{proof}
The flattening identity $\tfrac{a/v}{b/v} = \tfrac{a}{b}$ holds in any
domain where $v \ne 0$ and $v$ has a multiplicative inverse.
The cancellation requires only multiplicative structure---no
ordering, no norm, no integrality.  For $D = \text{F64}$, IEEE
non-zero floats form a multiplicative group under multiplication, so the
identity applies.  For $D = \mathbb{Z}$, the identity applies when $v$
divides both expressions.  \hfill$\square$
\end{proof}

\begin{corollary}[Transcendental Cancellation]
\label{cor:transcendental}
Under quotient tree arithmetic with $D = \text{F64}$, if the same
IEEE double $v \ne 0$ (including $v = e^x$, $v = \sqrt{x}$, etc.)
appears as both a numerator and denominator subtree---identified by
reference equality, not value equality---the cancellation of $v$ is
exact: no floating-point operation involving $v$ is performed, and the
result is exact for that factor regardless of the transcendental nature
of $v$.
\end{corollary}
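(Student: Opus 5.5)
The plan is to obtain the corollary as a specialisation of Theorem~\ref{thm:cancel_general} to $D = \text{F64}$, adding one point that the general theorem does not state: the cancellation is performed \emph{structurally}, on the tree, and not arithmetically, on the floats. Let $T = (N, M)$ be a quotient tree in which a node $v$ (a leaf holding a non-zero double) occurs as a subtree of $N$ and, as the same reference, as a subtree of $M$. First I will restrict attention to the case where $v$ sits in \emph{multiplicative position} in both. Concretely, $N$ and $M$ are built from $v$ by products and quotients only, so $\mathrm{val}(N) = \mathrm{val}(A)\cdot\mathrm{val}(v)$ and $\mathrm{val}(M) = \mathrm{val}(B)\cdot\mathrm{val}(v)$ for trees $A, B$ obtained by deleting the node $v$. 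This is exactly the setting of the flattening identity $\tfrac{a/v}{b/v} = \tfrac{a}{b}$ used in Theorem~\ref{thm:cancel_general}. It is also the only setting in which cancellation is meaningful, since $v$ under a $+$ cannot be removed.

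Second, I will define the rewritten tree $T' = (A, B)$ by pointer surgery. I will then check two facts. (i) In the real-number semantics of Definition~\ref{def:qt}, $\mathrm{val}(T') = \mathrm{val}(T)$ whenever $v \neq 0$, by Theorem~\ref{thm:cancel_general}. (ii) Evaluating $T'$ performs no floating-point operation with $v$ as an operand, because $v$ no longer appears in $T'$. Fact (ii) is immediate from the construction.

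The real content is then an error-accounting statement. Every rounding incurred while evaluating $T'$ comes from operations on $A$ and $B$. Hence the contribution of the factor $v$ to the error is zero, whatever the value of $v$ is. In particular it does not matter that $v = e^x$ or $\sqrt{x}$ is itself only a rounded approximation of a transcendental quantity. The same stored double is divided by itself symbolically, and $\tilde v/\tilde v = 1$ exactly for any non-zero finite double $\tilde v$. This is the sense of ``exact for that factor'' in the statement.

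The main obstacle, and the reason for reference equality, is to justify why identity rather than value equality is required. If two independently computed approximations $\tilde v_1$ and $\tilde v_2$ of the same $e^x$ were cancelled, the true ratio $\tilde v_1/\tilde v_2$ might not equal $1$, and the rewrite would silently change the computed value. With reference equality the two occurrences are the same double, so the identity holds without approximation.

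I would also need to exclude non-finite values. The claim that ``non-zero floats form a multiplicative group'' is not literally true under rounded multiplication: the product is not associative, and $\infty$ and NaN are not invertible. I will therefore argue only about the single exact identity $\tilde v \cdot \tilde v^{-1} = 1$ in $\mathbb{R}$, where $\tilde v$ is a non-zero \emph{finite} double regarded as a real number, rather than appealing to group structure of F64.
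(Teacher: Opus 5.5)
Your argument is correct (modulo one repair below). It differs from the paper's mainly in how the $D=\text{F64}$ case is justified.

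The paper gives no separate proof. The corollary is read off Theorem~\ref{thm:cancel_general}, whose proof applies $\frac{a/v}{b/v}=\frac{a}{b}$ on the grounds that non-zero IEEE doubles form a multiplicative group. The need for reference identity is explained only later, under ``Honest limits''.

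You instead make the cancellation an explicit pointer rewrite $T\mapsto T'=(A,B)$, check value preservation only in $\mathbb{R}$ for the stored double $\tilde v$, and observe that no operation of $T'$ has $\tilde v$ as an operand. This buys soundness. The group claim is false: rounded multiplication is not associative, and $\mathrm{fl}(x\cdot\mathrm{fl}(1/x))$ need not equal $1$ (e.g.\ $x=49$). Your version makes clear that exactness comes from never performing arithmetic on $v$, so only the trivial real identity $\tilde v/\tilde v=1$ is needed. You also supply a hypothesis the paper's theorem silently uses: $v$ must occur multiplicatively (in $(v+1,v)$ nothing cancels). The paper's route is a one-line specialisation uniform in $D$; yours is the one that actually holds for F64.

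The repair concerns your description of the multiplicative setting. ``Built from $v$ by products and quotients'' does not give $\mathrm{val}(N)=\mathrm{val}(A)\,\tilde v$.
\begin{itemize}
\item $v$ may sit in a denominator, as in the identity $\frac{a/v}{b/v}$ you cite.
\item If $v$ enters $N$ and $M$ with opposite polarity, e.g.\ $N=A\cdot v$ and $M=(B,v)$, the factor squares instead of cancelling.
\end{itemize}
So require $\mathrm{val}(N)=\mathrm{val}(A)\,\tilde v^{\varepsilon}$ and $\mathrm{val}(M)=\mathrm{val}(B)\,\tilde v^{\varepsilon}$ with a common $\varepsilon\in\{\pm1\}$. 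Here $\varepsilon$ is fixed by the parity of denominator positions on the path to $v$.

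Note also that your error accounting shows only that $\tilde v$ contributes no rounding, not that $T'$ evaluates at least as well as $T$. Deleting $v$ changes the intermediate magnitudes inside $A$ and $B$. For example, take $a_1=a_2=b_1=b_2=10^{-200}$ and $\tilde v=10^{300}$. Then $((a_1v)a_2,(b_1v)b_2)$ evaluates to $1$, but $(a_1a_2,b_1b_2)$ evaluates to $0/0$.
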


\begin{theorem}[Deferred Stability---General]
\label{thm:defer_general}
Let $f(q_1, \ldots, q_m)$ be computed via $m$ quotient tree operations.
In deferred evaluation (maintain the tree, flatten once at the end),
the rounding error at the final materialisation depends only on the
precision of the domain $D$ and is independent of $m$.  In eager
evaluation (materialise after each step in $D = F64$), errors accumulate
at up to $\tfrac{1}{2}$ ULP per step for $O(m)$ total ULP.
\end{theorem}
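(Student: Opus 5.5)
The plan is to separate the two regimes in the statement and to count \emph{rounding events} rather than analyse numerical values directly. The operations are those of the Rational Pair Arithmetic definition, lifted to quotient trees as in the proof of Theorem~\ref{thm:depth_general}. Each such operation only builds a new node. For example, $(N,M)\times(N',M')$ becomes the pair $(N\times N', M\times M')$. Building a node performs no evaluation in $D$, so deferred mode introduces no rounding during the $m$ steps. The exception is any exact integer step inside the IEEE window, which introduces none by Theorem~\ref{thm:iew}. When the computation finishes, the tree stores the operands exactly, with structure of size $O(m)$ by Theorem~\ref{thm:depth_general}. Before flattening, I would also apply Theorem~\ref{thm:cancel_general} and Corollary~\ref{cor:transcendental}, which remove shared factors by reference without any floating-point operation.

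The deferred claim then reduces to a statement about the single materialisation. Flattening produces one numerator and one denominator expression in $D$, and one final division yields $\mathrm{val}(T)$. I would phrase the claim precisely as follows. If the numerator and denominator are either exactly representable (the integer window case) or computed by a fixed evaluation scheme, the final division incurs at most $\tfrac12$ ULP. The result is then within $\tfrac12$ ULP of the correctly rounded value of the exact quotient. This bound is a property of the precision of $D$ alone and does not depend on $m$. For $D=\mathbb{Z}$ inside the window, the whole statement is immediate: Theorem~\ref{thm:exact-arith} makes every intermediate step exact, and only the final $n/d$ rounds.

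The eager claim is the standard forward error analysis. After each of the $m$ operations the result is rounded to F64, contributing relative error $\delta_i$ with $|\delta_i|\le u=2^{-53}$, which is $\tfrac12$ ULP. Multiplication and division compose as $\prod_i(1+\delta_i)$. This gives total relative error at most $(1+u)^m-1 = mu + O(m^2u^2)$, so $O(m)$ ULP in the worst case, and the bound is attained by an adversarial choice of rounding directions.

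The main obstacle is the deferred side in the genuine floating-point instantiation $D=F64$. Flattening a depth-$k$ tree itself requires $O(k)$ multiplications in $D$, each of which may round. Strict independence from $m$ therefore holds only if those products are exact (the integer window, or a cancellation-free tree), or if the statement is read as counting one rounding per materialisation. I would handle this by stating the theorem under the hypothesis that flattening is exact in $D$, or is performed in extended precision. Otherwise I would downgrade the claim to $O(k)$ ULP with $k\le m$ and point out that cross-cancellation typically keeps $k$ small. Addition and subtraction also need care, since they introduce sums whose conditioning can amplify error. I would note explicitly that the theorem bounds rounding error, not ill-conditioning.
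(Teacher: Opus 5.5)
Your proposal is correct and is essentially the paper's argument: it counts rounding events, so deferred evaluation incurs only the single final division $\mathrm{val}(N)/\mathrm{val}(M)$, costing at most $\tfrac12$ ULP (with every earlier step exact for $D=\mathbb{Z}$ in the window), while eager evaluation rounds at each of the $m$ steps. The obstacle you flag is genuine, and the paper resolves it with the reading you mention---it concedes that for F64 the operands $\mathrm{val}(N)$ and $\mathrm{val}(M)$ already carry accumulated rounding and claims only that the final division adds at most $\tfrac12$ ULP on top of it---so your sentence placing the result within $\tfrac12$ ULP of the correctly rounded exact quotient should be restricted to the exact-operand case.
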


\begin{proof}
Deferred evaluation produces a single division $\mathrm{val}(N)/\mathrm{val}(M)$
at the end.  For $D = \mathbb{Z}$ (integer window), both $N$ and $M$ are
exact integers, and the result is an exact rational.  For $D = \text{F64}$,
$\mathrm{val}(N)$ and $\mathrm{val}(M)$ each carry accumulated rounding
from the leaf operations, but the final division introduces at most
$\tfrac{1}{2}$ ULP additional error (IEEE 754~\S4.3.1).  In eager
evaluation, each of $m$ intermediate operations introduces up to
$\tfrac{1}{2}$ ULP, accumulating to $O(m)$ ULP total.
\hfill$\square$
\end{proof}

\paragraph{Machine learning as Quotient Tree Arithmetic over F64.}
Modern ML computation---linear layers, attention, normalisation---is
naturally expressed as quotient trees:
\begin{itemize}\setlength\itemsep{2pt}
\item \textbf{Softmax}: $\sigma_i = e^{x_i} / \sum_j e^{x_j}$ is the
  quotient tree $(e^{x_i},\, \Sigma)$ with leaf $\Sigma = \sum_j e^{x_j}$.
  Comparison of attention weights via cross-multiplication avoids explicit
  division; materialistion is deferred to the weighted-value step.
\item \textbf{LayerNorm}: $(x - \mu) / \sigma$ is a quotient with $\sigma$
  as a shared denominator subtree across all coordinates.  Symbolic
  cancellation applies when $\sigma$ appears in both numerator and denominator
  (e.g., gradient of LayerNorm).
\item \textbf{Chain rule}: $\partial\mathcal{L}/\partial x =
  (\partial\mathcal{L}/\partial y)(\partial y/\partial x)$ is a product
  of two quotient trees.  The common $\partial y$ factors in numerator and
  denominator cancel (Theorem~\ref{thm:cancel_general}), reducing the
  gradient expression to a single pair $(\partial\mathcal{L}, \partial x)$
  regardless of network depth.
\item \textbf{Attention}: $\text{Attn}(Q,K,V) = \text{softmax}(QK^\top/\sqrt{d_k})V$
  stores $\sqrt{d_k}$ as the denominator of a quotient tree; the scaling
  factor is never materialised until the final weighted sum.
\end{itemize}

\begin{theorem}[Non-Underflow---General]
\label{thm:fp_nonzero}
In QTA over any domain $D$, a product of non-zero elements of $D$ is zero
only if the domain itself has a zero-divisor or the true product equals zero.
For $D = \mathbb{Z}$ (integral domain): a product of non-zero integers is non-zero;
gradient underflow to zero is structurally impossible.
For $D = \text{F64}$: the ratio $\prod n_i / \prod d_i$ is zero in IEEE only if
its true value falls below $2^{-1074}$---vastly rarer than single-float underflow.
For $D = \mathbb{R}[x]$: impossible for non-zero polynomials (no zero-divisors).
\end{theorem}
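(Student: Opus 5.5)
The plan is to argue by the structure of a quotient-tree product and then specialise to each domain. Under QTA, a product of quotient trees $(N_1,M_1)\times\cdots\times(N_r,M_r)$ is, by the multiplication rule of Definition~\ref{def:qt} and the rational pair operations, the pair $(\prod_i N_i,\ \prod_i M_i)$. No division is performed until materialisation. So the value is zero iff $\mathrm{val}(\prod_i N_i)=0$, since every denominator is non-zero by definition of a pair. The general claim therefore reduces to a statement purely about $D$: if each factor $\mathrm{val}(N_i)\neq 0$ and $\prod_i \mathrm{val}(N_i)=0$, then $D$ has a zero-divisor.

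I would prove this by induction on $r$. Let $p_{r-1}=\prod_{i<r}\mathrm{val}(N_i)$, which is non-zero by the inductive hypothesis. If $p_{r-1}\cdot \mathrm{val}(N_r)=0$, then $p_{r-1}$ and $\mathrm{val}(N_r)$ form a pair of non-zero elements with zero product, i.e.\ a zero-divisor. The alternative disjunct, ``the true product equals zero'', covers the case where some factor is itself zero.

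Next I would instantiate the domains.
\begin{itemize}
\item For $D=\mathbb{Z}$, the integers form an integral domain, so the base case gives the result directly. Within the integer window (Theorem~\ref{thm:iew}, Theorem~\ref{thm:exact-arith}), each IEEE product of window integers is exact. The computed numerator therefore equals the true integer product and is non-zero. Hence no gradient stored as such a pair can vanish.
\item For $D=\mathbb{R}[x]$, the same induction applies, since $\mathbb{R}[x]$ is an integral domain: leading coefficients multiply to a non-zero leading coefficient.
\end{itemize}

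The F64 case is the main obstacle, because IEEE multiplication is not the true product and does have ``zero-divisors'' via underflow. There I would not claim structural impossibility. Instead I would use the deferred representation: numerator and denominator products are accumulated separately, and the only operation that can round to zero is the final materialisation $\mathrm{val}(N)/\mathrm{val}(M)$.

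I would first note that the partial products can themselves underflow if computed naively. To handle this, I would assume (and state explicitly) that the products are kept in the tree, or renormalised by cancelling or rescaling exponent factors between $N$ and $M$, as permitted by Theorem~\ref{thm:cancel_general}. Under that assumption, the final IEEE division returns $0$ only when the correctly rounded true quotient is below the smallest subnormal $2^{-1074}$ (more precisely below half of it under round-to-nearest).

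To compare with single precision, I would then point out that float32's smallest subnormal is $2^{-149}$. The deferred product therefore survives roughly $925$ additional binary orders of magnitude.

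I expect the honest gap to be this intermediate-underflow issue. If the renormalisation argument cannot be made airtight, I would weaken the F64 clause to: under deferred evaluation with exponent renormalisation, the value is zero only if the true value is below $2^{-1074}$.
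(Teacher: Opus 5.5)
Your argument uses the same case split as the paper's proof: $\mathbb{Z}$ and $\mathbb{R}[x]$ are integral domains with no zero-divisors, and in F64 zero can only arise at the subnormal threshold. Where you differ is in care, and in the F64 case that difference matters.

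The paper simply asserts that a product of non-zero doubles rounds to zero only if the result is below $2^{-1074}$. That holds for a single multiplication but fails for a chain. For example, $2^{-600}\cdot 2^{-600}\cdot 2^{1000}$ evaluated left to right gives $0$, although the true value is $2^{-200}$. So your explicit renormalisation hypothesis is what makes the F64 clause true at all; it is not a weakening of a proof that otherwise works.

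Two ways to sharpen that hypothesis:
\begin{enumerate}
\item Merely keeping the products in the tree does not help, since materialisation must still form $\mathrm{val}(N)$ and $\mathrm{val}(M)$. The operative step is exact power-of-two rescaling, i.e.\ carrying the exponents separately.
\item The rescaling must also protect the denominator from overflow, because $1/\infty=0$ in IEEE. Take numerator $1$ and denominator $2^{600}\cdot 2^{600}\cdot 2^{-1000}$: the true ratio is $2^{-200}$, but the computed ratio is $0$.
\end{enumerate}
With exponent tracking the mantissas carry relative error $O(m\,2^{-53})$. So the materialised ratio is $0$ only if the true value is at most about $2^{-1075}(1+O(m\,2^{-53}))<2^{-1074}$, which is the stated bound.

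Minor points:
\begin{enumerate}
\item The disjunct ``the true product equals zero'' cannot mean that some factor is zero, since all factors are assumed non-zero. It reads better as the exact real value in an inexact domain like F64, where underflowing pairs play the role of zero-divisors.
\item Your inductive step should be a case split. Either $p_{r-1}=0$, and the induction hypothesis already gives a zero-divisor, or $p_{r-1}\neq 0$, and then $p_{r-1}$ and $\mathrm{val}(N_r)$ are zero-divisors. As written, you assert $p_{r-1}\neq 0$.
\item ``Every denominator is non-zero by definition of a pair'' does not cover the product denominator $\prod M_i$. It can vanish when $D$ has zero-divisors (already your conclusion) or underflow in F64 (giving $\pm\infty$ or NaN rather than $0$).
\item For $\mathbb{Z}$, the integer window is more than the numerator needs. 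Every factor has magnitude at least $1$ and IEEE rounding is monotone, so computed products can overflow but never round to $0$. The window is what keeps the materialised ratio $n/d$ above $2^{-53}$ in magnitude.
\end{enumerate}
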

\begin{proof}
In each case, the product is zero iff the domain contributes a zero-divisor or the
true value is zero.  For integral domains ($\mathbb{Z}$, $\mathbb{R}[x]$): no
zero-divisors, so the product of non-zero elements is non-zero.  For $\text{F64}$:
the product of non-zero IEEE doubles underflows to zero only if the result falls
below the minimum positive denormal $2^{-1074}$.  \hfill$\square$
\end{proof}

\paragraph{Honest limits of the general case.}
Strengthening the domain weakens some guarantees:
\begin{enumerate}\setlength\itemsep{2pt}
\item \textbf{Comparison}: for $D = \text{F64}$, cross-multiplication
  $(n_1 d_2$ vs $n_2 d_1)$ introduces up to $1$ ULP total---accurate but
  not unconditionally exact.  For $D = \mathbb{Z}$ (integer window),
  comparison is exact.
\item \textbf{GCD cleanup}: inapplicable outside $\mathbb{Z}$.
  Tracked cancellation (\S\ref{sec:tracked}) remains valid;
  normalisation via floating-point rescaling provides an approximate substitute.
\item \textbf{Symbolic identity}: Theorem~\ref{thm:cancel_general} requires
  $v$ to be identified by reference, not by value.  Two separately computed
  $e^x$ floats with identical IEEE bits are indistinguishable---this is
  a feature (they cancel correctly) not a bug---but two separately computed
  $e^x$ floats with $\frac{1}{2}$ ULP difference will not cancel.
  Compiler sharing of common subexpressions is essential.
\end{enumerate}

\begin{remark}[Connection to computer algebra]
QTA over $\mathbb{R}[x_1,\ldots,x_n]$ (multivariate polynomials) recovers
the field of rational functions---the object studied in computer algebra
systems.  QTA is a lightweight, hardware-native alternative: instead of
arbitrary symbolic expressions, the leaves are IEEE doubles (or integers),
and the quotient structure is maintained as a typed pair.  The CAS is not
needed; the type system suffices.
\end{remark}

\section{The Cleanup Operation}
\label{sec:cleanup}

Repeated rational pair arithmetic causes numerators and denominators to
grow.  After $k$ multiplications starting from values with magnitude
$M$, denominators may reach $M^k$.  Cleanup prevents unbounded growth.

\begin{definition}[Cleanup]
The \emph{cleanup} of a rational pair $(n, d)$ is
\[
\mathrm{cleanup}(n, d) = \left(\frac{n}{g},\; \frac{d}{g}\right),
\quad g = \gcd(|n|, |d|),\quad g \geq 1.
\]
\end{definition}

\begin{lstlisting}[caption=Binary GCD cleanup (vectorisable)]
// Applies to a pair of INT64 values (n, d)
function cleanup(n, d):
    if n == 0: return (0, 1)
    if d == 0: return (sign(n), 0)   // infinity
    g = binary_gcd(abs(n), abs(d))   // ~15 INT64 ops
    return (n / g, d / g)
\end{lstlisting}

\begin{theorem}[Cleanup Preserves Value and Reduces Magnitude]
\label{thm:cleanup}
$\mathrm{cleanup}(n, d)$ represents the same rational number as
$(n, d)$, and $\gcd(n/g, d/g) = 1$.
Moreover, $|n/g| \cdot |d/g| \leq |n| \cdot |d| / g^2$,
so the product of magnitudes decreases by $g^2$.
\end{theorem}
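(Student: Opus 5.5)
The plan is to prove the three assertions in order, using only elementary divisibility facts about $\mathbb{Z}$. First I will fix the setting: $n,d$ are integers with $d \neq 0$, and $g=\gcd(|n|,|d|)\geq 1$. Because $d\neq 0$, $g$ is well defined and positive. By definition $g$ divides both $n$ and $d$, so $n/g$ and $d/g$ are integers. If $(n,d)$ lies in the integer exactness window of Theorem~\ref{thm:iew}, then $|n/g|\le|n|$ and $|d/g|\le|d|$, so both quotients also lie in the window. The divisions are exact integer divisions, so computing them in IEEE doubles introduces no rounding.

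\emph{Value preservation.} I will check the equivalence criterion from Section~\ref{sec:gpa}: $(n_1,d_1)\sim(n_2,d_2)$ iff $n_1d_2=n_2d_1$. Here this reads $n\cdot(d/g)=(n/g)\cdot d$, and both sides equal $nd/g$, so the two pairs represent the same rational. The degenerate case $n=0$ fits this picture. There $g=|d|$, and the result $(0,\pm1)$ matches the listing's $(0,1)$ up to sign normalisation, which also represents $0$.

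\emph{Coprimality.} This is the main step, though it is standard. Let $h=\gcd(n/g,d/g)$. Then $hg$ divides both $n$ and $d$, so $hg$ divides $g$ by the universal property of the gcd. Hence $h=1$. Equivalently, I could use Bézout: write $g=un+vd$ and divide by $g$ to get $1=u(n/g)+v(d/g)$.

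\emph{Magnitude bound.} Here $|n/g|\cdot|d/g|=|n|\,|d|/g^2$ holds with equality, so the stated inequality is immediate. I will remark that the reduction factor is exactly $g^2$, and that it is strict whenever $g>1$ and $nd\neq0$. I expect no real obstacle. The only care needed is handling $n=0$, and noting that the integer-window hypothesis is what makes the floating-point divisions exact.
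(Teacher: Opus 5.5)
Your proof is correct and follows the same three-step structure as the paper's: value preservation by cancelling $g$, coprimality of the reduced pair, and the identity $|n/g|\,|d/g| = |n|\,|d|/g^2$. You are more careful than the paper, which asserts coprimality ``by definition'' and writes the magnitude step without absolute values; you instead justify $\gcd(n/g,d/g)=1$ via the universal property of the gcd (or B\'ezout) and handle signs and the $n=0$ case explicitly.
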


\begin{proof}
Both $(n, d)$ and $(n/g, d/g)$ equal $n/d$ since $g$ cancels.
The GCD of the reduced form is 1 by definition.
Magnitude reduction: $(n/g)(d/g) = nd/g^2 \leq nd$.
\end{proof}

\begin{corollary}[Cleanup as Invariant Maintenance]
If cleanup is applied after each arithmetic operation,
and input pairs have denominators $\leq D$, then output denominators
remain $\leq D$ provided the true mathematical result has denominator
$\leq D$.
\end{corollary}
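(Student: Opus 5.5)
The plan is to reduce the corollary to uniqueness of lowest-terms representations, using Theorem~\ref{thm:cleanup}. I read ``the true mathematical result has denominator $\leq D$'' as: when the exact rational value $r$ of the operation is written in lowest terms $r = p/q$ with $\gcd(p,q)=1$ and $q>0$, we have $q \leq D$. I also take two standing assumptions. First, the raw operation (one of the four formulas in the Rational Pair Arithmetic definition) is carried out exactly, i.e.\ its intermediate numerator and denominator lie in the integer exactness window. Theorem~\ref{thm:exact-arith} gives sufficient conditions for this. Second, cleanup fixes signs so that the output denominator is positive.

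\textbf{Step 1 (one operation).} Let $(a,b)$ and $(c,d)$ be the inputs and let $(n,m)$ be the raw output, for instance $(ad+bc,\,bd)$ for addition. By the exactness assumption, $(n,m)$ is an exact integer pair with $n/m = r$. By Theorem~\ref{thm:cleanup}, $\mathrm{cleanup}(n,m) = (n',m')$ satisfies $n'/m' = r$ and $\gcd(n',m')=1$.

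\textbf{Step 2 (uniqueness of reduced forms).} I show that $(n',m')$ coincides with $(p,q)$. From $n'/m' = p/q$ we get $n' q = p m'$. Since $\gcd(n',m')=1$, $m'$ divides $q$. Since $\gcd(p,q)=1$, $q$ divides $m'$. Hence $|m'| = q \leq D$, and by the sign convention $m' = q$. Bounding only the magnitude would already suffice for the claim. The degenerate case $r=0$ is handled directly: the cleanup routine returns $(0,1)$, and $1 \leq D$.

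\textbf{Step 3 (chaining).} For a sequence of operations, I induct on the step index. At each step the inputs are either original inputs, with denominators $\leq D$ by hypothesis, or earlier cleaned outputs, with denominators $\leq D$ by the inductive hypothesis. Steps 1--2 then bound the new output. Notably, the bound on input denominators is not actually used in Step 2. It matters only for keeping the raw products inside the exactness window of Theorem~\ref{thm:exact-arith}, which is what makes Step 1 valid.

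The main obstacle is exactly that window condition. The statement tacitly assumes the raw numerator and denominator, such as $bd \leq D^2$, do not overflow $2^{53}$ before cleanup runs. I would make this an explicit hypothesis, for example $D^2 < 2^{52}$ together with a matching bound on the numerators as in Theorem~\ref{thm:qn-exact}. Without it, the pair passed to cleanup need not represent $r$, and the argument breaks.
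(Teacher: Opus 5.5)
Your proof is correct and is essentially the paper's argument made explicit. The paper states this corollary without proof, as an immediate consequence of the clause $\gcd(n/g,d/g)=1$ in Theorem~\ref{thm:cleanup}; your uniqueness-of-lowest-terms step is exactly what that deduction needs, and your remark that the input bound matters only for keeping the raw operation inside the exactness window correctly brings out a hypothesis the paper leaves tacit. When you state that hypothesis, take it from Theorem~\ref{thm:exact-arith} (or simply require each raw numerator and denominator to lie in $(-2^{53},2^{53})$) rather than from Theorem~\ref{thm:qn-exact}, whose bound $K\cdot 10^n<2^{52}$ does not control the multiplication numerator $ac$, which can reach $K^2$.
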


\begin{theorem}[Pre-multiplication Simplification]
\label{thm:pre-simplify}
For $(a,b) \times (c,d)$, let $g_1 = \gcd(a,d)$ and $g_2 = \gcd(b,c)$.
Then
\[
(a,b) \times (c,d) = \left(\frac{a}{g_1}\cdot\frac{c}{g_2},\;
\frac{b}{g_2}\cdot\frac{d}{g_1}\right),
\]
and the maximum intermediate value is reduced by $g_1 g_2$ compared
to the naive computation $(ac, bd)$.
\end{theorem}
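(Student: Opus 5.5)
The identity is a direct consequence of the divisibility facts $g_1 \mid a$, $g_1 \mid d$, $g_2 \mid b$, $g_2 \mid c$. The plan has three steps.

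\emph{First}, I make sure every quantity on the right-hand side is an integer. Since $g_1=\gcd(a,d)$ divides both $a$ and $d$, the quotients $a/g_1$ and $d/g_1$ are integers; likewise $b/g_2$ and $c/g_2$. Hence the proposed pair consists of integers and is a legitimate element of $\mathbb{Q}_{\mathrm{fp}}$. Its denominator $(b/g_2)(d/g_1)$ is nonzero, because $b,d\neq 0$ and $g_1,g_2\ge 1$. The degenerate case $a=0$ is harmless: with the convention $\gcd(0,d)=|d|$ we get $g_1\neq 0$, so nothing breaks.

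\emph{Second}, I verify equality of the represented rationals. By the definition of rational pair multiplication, $(a,b)\times(c,d)=(ac,bd)$. The candidate pair has numerator $ac/(g_1g_2)$ and denominator $bd/(g_1g_2)$. It is therefore exactly $(ac,bd)$ with both components divided by the common factor $g=g_1g_2$. As in the proof of Theorem~\ref{thm:cleanup}, dividing numerator and denominator by the same nonzero integer leaves the value $n/d$ unchanged. Equivalently, I can check the cross-multiplication criterion $ac\cdot bd/(g_1g_2) = bd\cdot ac/(g_1g_2)$, which is trivial.

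\emph{Third}, I treat the magnitude claim, which I read as a statement about the largest intermediate integers formed. The naive computation forms $ac$ and $bd$. The simplified one forms only the factors $a/g_1,\,c/g_2,\,b/g_2,\,d/g_1$ (each of magnitude at most the original) and their products. Each of these products equals the corresponding naive product divided exactly by $g_1g_2$, since $|ac|/(g_1g_2)=|a/g_1|\,|c/g_2|$ and similarly for $bd$. So the maximum magnitude $\max(|ac|,|bd|)$ drops by precisely the factor $g_1g_2$. I will add a remark linking this to Theorem~\ref{thm:exact-arith}(a): the product stays inside the window $2^{53}$ whenever $\max(|ac|,|bd|) < g_1g_2\cdot 2^{53}$, which enlarges the exactness range.

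The only step needing care is the interpretation of ``maximum intermediate value'' in the third step. I will make it precise as the maximum absolute value of any integer produced during evaluation. I will also note that the gcd computations themselves never exceed $\max(|a|,|b|,|c|,|d|)$, so they introduce no larger intermediates.
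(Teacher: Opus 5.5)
Your first two steps are correct and match the paper's argument. The paper simply notes that $g_1$ and $g_2$ cancel across numerator and denominator. Your version says the same thing more carefully: the candidate pair is $(ac,bd)$ with both entries divided by $g_1g_2$, all four quotients are integers, and $g_1,g_2\ge 1$ because $b,d\neq 0$. The core of your third step is also the paper's proof: the identity $|ac|/(g_1g_2)=|a/g_1|\,|c/g_2|$ (and likewise for $bd$) is exactly its comparison of the numerator bound $|a||c|/(g_1g_2)$ with $|a||c|$.

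The one thing that would fail is the formalisation in your last paragraph. Suppose ``maximum intermediate value'' means the largest absolute value of \emph{any} integer produced during evaluation. Then $g_1$, $g_2$ and the values arising inside the gcd computations all count. These are bounded by the inputs, not by the reduced products, so they can be the largest values produced.

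Take the paper's own example $(6,10)\times(5,12)$. The naive maximum is $120$ and the reduced products are $1$ and $4$. But the simplified evaluation also produces $g_1=6$, so the maximum drops by a factor of $20$, not $g_1g_2=30$. For $(N,N)\times(N,N)$ the factor is $N$ rather than $N^2$.

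Your remark that the gcd stage ``never exceeds $\max(|a|,|b|,|c|,|d|)$'' only shows it produces nothing larger than the inputs. That does not give an exact factor of $g_1g_2$. You have two ways to fix this:
\begin{itemize}
\item State the claim for the products formed. This is what the paper compares, and it is what matters for the $2^{53}$ window, since the inputs are assumed to lie in it.
\item State the all-integers version honestly: the simplified maximum is at most $\max\bigl(\max(|ac|,|bd|)/(g_1g_2),\,\max(|a|,|b|,|c|,|d|)\bigr)$. So the reduction factor is at most $g_1g_2$, with equality whenever the reduced products are at least as large as every input.
\end{itemize}
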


\begin{proof}
$\frac{a}{b}\cdot\frac{c}{d} = \frac{a/g_1}{b/g_2}\cdot\frac{c/g_2}{d/g_1}$
since $g_1, g_2$ cancel across numerator and denominator.
The intermediate numerator bound is $\frac{|a||c|}{g_1 g_2}$ vs $|a||c|$.
\end{proof}

\begin{example}
$(6, 10) \times (5, 12)$: $g_1 = \gcd(6,12) = 6$, $g_2 = \gcd(10,5) = 5$.
Pre-simplified: $(6/6, 10/5) \times (5/5, 12/6) = (1,2)\times(1,2) = (1,4)$.
Intermediate values never exceed~2, vs naively $30/120$ with intermediates up to~120.
\end{example}

\subsection{Cleanup as a hardware primitive}

The Euclidean algorithm for $\gcd(n, d)$ terminates in $O(\log\min(|n|,|d|))$
integer operations---typically 10--20 for 64-bit values.
For GPU warps processing $W = 32$ threads simultaneously,
vectorised GCD (available on modern architectures via integer
instructions) can cleanup 32 rational pairs per clock cycle.

\begin{definition}[\textsc{RatCleanup} instruction]
We propose a new SIMD instruction operating on packed $(n_i, d_i)$
pairs: given a vector of $W$ rational pairs, return the vector of
their cleaned-up forms in parallel.  This is analogous to how
tensor cores perform $4\times 4$ matrix multiplications in hardware.
\end{definition}

Cleanup after each warp instruction bounds representation growth
and ensures results remain in the IEEE integer exactness window
(Theorem~\ref{thm:iew}) indefinitely---without any multi-precision
library.


\subsection{Tracked factor cancellation}
\label{sec:tracked}

The binary GCD algorithm dominates cleanup cost.  However, GCD is a
\emph{discovery} operation---it finds an unknown common factor.  When the
computation history is known (as it always is in a compiler or runtime),
we already know what factors were introduced into the denominator, and we
can cancel them directly.

\begin{theorem}[Tracked Cancellation]
\label{thm:tracked}
Let $r = (a,b) \times (c,d) = (ac, bd)$.  Let $g_1 = \gcd(a,d)$ and
$g_2 = \gcd(b,c)$ (Theorem~\ref{thm:pre-simplify}).  Both $g_1$ and $g_2$ can
be computed without the Euclidean algorithm when $b$ and $d$ are known
denominators from the computation history: trial division of $a$ by $d$
and $b$ by $c$ each takes $O(1)$ operations.  For fixed-denominator types
$Q_n$ (where $d \in \{10^n\}$ is a compile-time constant), the cancellation
reduces to a single integer division by a known constant.
\end{theorem}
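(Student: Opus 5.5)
The plan is to read the statement operationally and prove a precise version of it. I would first make explicit what ``known denominators from the computation history'' means: alongside each pair $(a,b)$ the runtime keeps a finite list $F_b=\{f_1,\dots,f_s\}$ of factors with $b=\prod_i f_i$, namely the factors introduced by earlier multiplications and divisions. Here $s$ is bounded by a constant fixed by the program, not by the data. Correctness of any cancellation then comes from Theorem~\ref{thm:pre-simplify}. That theorem shows that dividing $a$ and $d$ by any common divisor, and $b$ and $c$ by any common divisor, preserves the value of the product. So I only need to show that suitable common divisors can be found in $O(1)$ operations, and that these coincide with $g_1,g_2$ in the cases the statement targets.

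The argument then has three steps.

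\begin{enumerate}
\item \textbf{Single divisibility test.} The test ``does $d$ divide $a$?'' is one IEEE remainder (or one division plus one multiply-compare) inside the window of Theorem~\ref{thm:iew}, so it costs $O(1)$. If it succeeds, then $\gcd(a,d)=|d|$. Hence $g_1=|d|$ is obtained without running the Euclidean algorithm, and the same holds for $g_2$ by testing $c$ against $b$.
\item \textbf{Tracked factors.} In the tracked setting I run the same test against each recorded factor $f_i\in F_d$ (respectively $F_b$). Each successful test yields a common divisor that can be cancelled. With $s$ bounded this takes $O(s)=O(1)$ operations. The product of the successful $f_i$ is the part of $g_1$ that is visible in the history; cancelling it is value-preserving by Theorem~\ref{thm:pre-simplify}.
\item \textbf{The $Q_n$ specialisation.} When $d=10^n$ is a compile-time constant, the only divisor to cancel is the known constant (or its precomputed factors $2^n,5^n$). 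Cancellation is therefore one integer division by a constant, which a compiler can further reduce to a multiply-and-shift.
\end{enumerate}

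The main obstacle is step~2. As literally stated, the theorem claims that $g_1=\gcd(a,d)$ itself is obtained in $O(1)$, but trial division only detects whether the whole of $d$ (or a tracked factor) divides $a$. A partial common factor that the history never recorded, for instance $\gcd(6,4)=2$ when $F_4=\{4\}$, is not found. I would handle this honestly in two parts. First, prove exact equality with $g_1$ in the case where $d$, or the recorded factorisation of $d$, divides $a$; in the $Q_n$ case with prime-power tracking of $2$ and $5$ this covers every possibility. Second, for the general case, state that tracked cancellation removes a divisor of $g_1$, so value is preserved and magnitude does not grow. A full cleanup (Theorem~\ref{thm:cleanup}) remains available as a fallback to recover the residual factor $g_1/\prod f_i$.
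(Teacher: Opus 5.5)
Your core argument is the paper's. Its proof is your step~1: one remainder check of $a$ against the known $d$, whether $d$ is a compile-time constant or a register value from the history, plus the remark that the Euclidean algorithm is needed only when the common factor is not structurally known. The paper never addresses the partial-factor case. Your $\gcd(6,4)=2$ example therefore exposes a real overstatement in the theorem. Your two-part version (exact $g_1$ when $d\mid a$; otherwise a value-preserving divisor of $g_1$, with full cleanup as fallback) is the defensible reading of what the paper tacitly assumes. Your factor-list model in step~2 is an extension the paper does not make.

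Two of your own refinements need repair.

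In step~3, $\gcd(a,10^n)$ can be any $2^j5^k$ with $0\le j,k\le n$. Testing only $10^n$, $2^n$ and $5^n$ therefore does not cover every possibility:
\begin{itemize}
\item $\gcd(2,100)=2$ is missed entirely;
\item for $a=20$ the tests cancel only $4$, leaving $a/4=5$ and $d/4=25$ with the common factor $5$ still present.
\end{itemize}
Exact coverage needs a finer scheme. One option is to record $2$ and $5$ individually with multiplicity $n$ and test them one at a time: at most $2n$ remainder checks, or a trailing-zero count for the $2$-part. This is still $O(1)$ because $n$ is a compile-time constant, but it is not ``one integer division by a constant.'' A single division recovers $g_1$ only in the sub-case $10^n\mid a$, which is all the paper's proof actually checks. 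If you want the gcd from a single remainder by the constant, compute $r=a\bmod 10^n$ and read $\gcd(r,10^n)$ from a precomputed table of size $10^n$, using $\gcd(a,10^n)=\gcd(r,10^n)$.

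In step~2, the tests must run sequentially, each on the numerator already reduced by the earlier successful factors. If each $f_i$ is tested against the original $a$, the product of the successful ones need not divide $a$. For example, with $a=2$ and $F_d=\{2,2\}$ both tests pass, but $4\nmid 2$. So ``the product of the successful $f_i$'' is not in general a divisor of $g_1$. With sequential testing that product divides both $a$ and $d$, hence $g_1$. Value preservation then follows from Theorem~\ref{thm:pre-simplify} as you intend.
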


\begin{proof}
The key observation is that $\gcd(a, d)$ requires knowing only $d$ and~$a$.
When $d$ is a known constant (as in $Q_2$, $Q_5$), dividing $a$ by $d$ and
checking whether the remainder is zero is a single \texttt{IDIV} instruction.
When $d$ comes from the computation history (e.g., the denominator of a
preceding operation), it is available in a register; the same $O(1)$
trial division applies.  The Euclidean algorithm is only needed when the
common factor is \emph{not} structurally known---the generic case for
arbitrary inputs, but not the common case for structured computation.
\end{proof}

\begin{corollary}[Fixed-Denominator Cleanup is $O(1)$]
For $Q_n$ types with fixed denominator $10^n$, cleanup after addition is a
single division by a known constant.  The GCD computation reduces to a
remainder check.  This eliminates the $\sim 12$-cycle binary GCD cost for
the dominant use cases (currency arithmetic, ML weights with fixed
denominator).
\end{corollary}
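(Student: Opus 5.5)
The plan is to work directly from the addition rule in the Rational Pair Arithmetic definition, specialised to the $Q_n$ setting. There every operand carries the same compile-time denominator $D = 10^n$. For $(a, D) + (c, D)$ the raw rule gives $(aD + Dc,\; D^2)$. Since the common factor $D$ is \emph{structurally known} (it was introduced by the operation itself), we need not discover it, in the spirit of Theorem~\ref{thm:tracked}. Dividing both components by $D$ gives $(a+c, D)$. This value equals the raw result by Theorem~\ref{thm:cleanup}, since cancelling a common factor preserves the represented rational. So the first step is to show that same-denominator addition needs only one exact division of each component by the known constant $D$. Better still, the compiler can emit $(a+c, D)$ directly and skip the division.

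The second step handles what ``cleanup'' should mean for a $Q_n$ value. If the invariant is ``denominator equals $10^n$'' rather than full reduction to lowest terms, the step above already finishes the cleanup. If a genuinely reduced form is wanted, note that $\gcd(m, 10^n) = 2^{i}5^{j}$ with $i, j \le n$. It can therefore be found by testing divisibility of $m$ by the fixed divisors of $10^n$. For example, compute $\gcd(m \bmod 10^n, 10^n)$ from a precomputed table, or use at most $2n$ remainder checks by $2$ and $5$. Each check is a division by a known constant, and for fixed $n$ the count is $O(1)$. This is the ``remainder check'' in the statement, and it replaces the binary GCD loop whose length depends on the size of $m$.

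Exactness follows from Theorem~\ref{thm:qn-exact}. With $|a|, |c| \le K$ and $K\cdot 10^n < 2^{53}/2$, the sum $a+c$ stays inside the window of Theorem~\ref{thm:iew}. The raw intermediate $aD + cD$ is also representable, so every division performed is an exact integer division.

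The main obstacle is pinning down what ``$O(1)$'' and ``single division'' mean. The statement blurs two notions: canonical lowest terms, which needs the bounded $2$-and-$5$ remainder loop, and the fixed-denominator invariant, which needs a single division by $D$ or none at all. I would first argue the fixed-denominator version, which is exactly the claim. I would then add the lowest-terms version as a remark, pointing out that its cost is bounded by a constant depending only on $n$, which is compile-time fixed. The final point, that this removes the $\sim 12$-cycle binary GCD cost, is then a direct comparison with the cost stated in the hardware-primitive discussion.
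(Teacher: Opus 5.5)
Your proposal is correct and follows essentially the paper's route. The paper gives no separate proof and rests the corollary on Theorem~\ref{thm:tracked}: the common factor $10^n$ is structurally known, so cancellation is a single division (or remainder check) by a compile-time constant rather than a binary GCD loop. Your separation of the fixed-denominator invariant from full lowest-terms reduction is a genuine sharpening that the paper glosses over. Since $\gcd(m,10^n)=2^i5^j$ can be a proper divisor of $10^n$, a single remainder check against $10^n$ would miss it, whereas your bounded $2$/$5$ checks (or table lookup on $m \bmod 10^n$) find the actual GCD in time depending only on the fixed $n$.
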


The implication for throughput is significant.  The $18\times$ figure in
Section~\ref{sec:throughput} is the worst case for integer-leaf QTA
with full binary GCD.  For $Q_n$ arithmetic or any computation with a tracked denominator, the cleanup
cost drops from $\sim 12$ cycles to $\sim 1$--$2$ cycles, bringing total
overhead to $\sim 6$--$8\times$ over FP32---and much less against FP64,
which is the appropriate baseline for exact arithmetic.

\section{Nested Rational Pairs}
\label{sec:nested}

We extend rational pairs to allow the numerator or denominator to
be itself a rational pair, forming a \emph{symbolic expression tree}.

\begin{definition}[Nested Rational Pair, $Q^{(k)}$]
Define $Q^{(0)} = \mathbb{Q}_{\mathrm{fp}}$ (flat pairs) and
$Q^{(k)} = \{(p, q) : p, q \in Q^{(k-1)}\}$ for $k \geq 1$.
A \emph{$k$-nested rational pair} is any element of $Q^{(k)}$.
\end{definition}

\begin{theorem}[Bounded Depth Growth]
\label{thm:bounded-depth}
Under the four arithmetic operations on rational pairs, if both
operands have nesting depth $\leq k$, the result has nesting depth
$\leq k + 1$.
\end{theorem}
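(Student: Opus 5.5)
The natural route is to fix a precise notion of depth and then read the four operation rules as \emph{constructors} that place existing subtrees under a single new pair node. I would define depth on quotient trees (and hence on $Q^{(k)}$) by $\mathrm{depth}(\ell)=0$ for a leaf (a flat entry of $\mathbb{Q}_{\mathrm{fp}}$ counts as depth $0$, matching $Q^{(0)}$) and $\mathrm{depth}((N,M)) = 1+\max(\mathrm{depth}(N),\mathrm{depth}(M))$. I would also interpret the operations on nested operands $\mathbf{p}=(P_1,P_2)$, $\mathbf{q}=(R_1,R_2)$ exactly as in the Rational Pair Arithmetic definition: for example $\mathbf{p}\times\mathbf{q} = (P_1\times R_1,\,P_2\times R_2)$, and similarly for the other operations. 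The alternative is the \emph{deferred} reading, $\mathbf{p}\circ\mathbf{q} := $ a new node whose children are $\mathbf{p}$ and $\mathbf{q}$.

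\textbf{Case 1: the deferred reading.} The result is a single new node over two operands of depth $\le k$. By the depth recursion its depth is $1+\max(\le k,\le k)\le k+1$, which settles this case immediately.

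\textbf{Case 2: the componentwise reading.} Here I would do structural induction on $k$.
\begin{itemize}
\item Base case $k=0$: both operands are flat pairs, and the formulas produce a flat pair. That is depth $0\le 1$.
\item Inductive step: write $\mathbf{p}=(P_1,P_2)$ and $\mathbf{q}=(R_1,R_2)$ with each $P_i,R_i$ of depth $\le k-1$. Each output component is either a product of two such subtrees or a sum or difference of two products.
\item The difficulty is that in the componentwise reading a product like $P_1\times R_2$ is itself computed by the same rules, so applying the inductive hypothesis naively gives depth $k$ for the product and $k+1$ for the sum. The outer node would then push this to $k+2$.
\end{itemize}
This is the step I expect to be the main obstacle.

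\textbf{Handling the obstacle.} The fix I would try first is to strengthen the inductive statement: \emph{every operation on operands of depth $\le k$ yields depth $\le \max(k,\,\cdot)+1$, and cross products are formed as deferred nodes.} Concretely, $P_1\times R_2$ is represented as the pair node $(P_1,R_2)$-style product subtree of depth $\le k$ rather than being recursively expanded. With that convention each component of the result has depth $\le k$, and wrapping them in the outer pair gives $\le k+1$.

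I would then check the invariant separately for $+,-,\times,\div$. Division is just multiplication with $R_1,R_2$ swapped, so it inherits the bound directly. Addition and subtraction need the combining node for $a d\pm b c$ to be counted as a domain operation on the leaves' level, not as an extra pair level. I would make this explicit: only quotient nodes (pairs) increase depth, while $+$ and $\times$ nodes live inside $D$'s closure and are charged to no nesting level. This is the same convention Theorem~\ref{thm:depth_general} implicitly uses when it says only closure of $D$ under $+,\times$ is needed.

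\textbf{Size bound.} Finally, I would note the consequence that the size grows additively: the new tree reuses the two operand trees by reference and adds $O(1)$ nodes. Hence after $m$ operations the size is $O(m)$, supporting the claim in Theorem~\ref{thm:depth_general}.
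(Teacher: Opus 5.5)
Your Case~1 is sound. It captures the paper's idea that each operation adds one level on top of existing material. Your Case~2 patch is in fact the paper's own move: its one-line proof treats the products $ad, bc$ of the components as one level up and assembles $(ad\pm bc,\,bd)$ from them, with the $\pm$ charged nothing. In Case~2, though, that convention is exactly what needs justifying, and your justification is wrong. So, as written, Case~2 does not establish the bound for the componentwise operations it set out to treat.

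\textbf{Where the patch fails.} You describe $P_1\times R_2$ as a ``$(P_1,R_2)$-style'' pair node, yet a pair node denotes the quotient $P_1/R_2$, not the product. The $+$ and $\times$ nodes you then declare depth-free have nested pairs as children. They are therefore not operations ``inside $D$'s closure'' at the leaves' level, and nothing licenses charging them zero depth. In effect the patch abandons the componentwise reading and re-enters the deferred one with an ad hoc depth convention. Also, ``$\le\max(k,\cdot)+1$'' is not a strengthening of anything.

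\textbf{The repair.} The obstacle is an artifact of pushing the weak hypothesis ``depth $\le k$ in, $\le k+1$ out'' through the recursion. Under a genuinely componentwise reading the operations \emph{preserve} depth, as your own base case already shows (flat $\circ$ flat $=$ flat). Take as inductive hypothesis that operations on operands of depth $\le k-1$ yield results of depth $\le k-1$. Then $P_1\times R_2$, $P_2\times R_1$, their sum or difference, and $P_2\times R_2$ all have depth $\le k-1$, so the new pair has depth $\le k\le k+1$. A shallower operand $x$ can first be lifted, repeatedly if needed, to $(x,u)$ with $u$ a unit pair of the same depth as $x$. With Case~2 replaced by this depth-preservation induction the proposal is complete, and for the componentwise reading it gives the sharper bound $\le k$.
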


\begin{proof}
The operations $(a,b) \mathop{\circ} (c,d)$ produce results whose
components are drawn from arithmetic combinations of $a, b, c, d$.
If $a,b,c,d \in Q^{(k)}$, then $ad, bc \in Q^{(k+1)}$ since each is
a product of two depth-$k$ values.  Hence the result $(ad \pm bc, bd)$
or $(ac, bd)$ lies in $Q^{(k+1)}$.
\end{proof}

\begin{corollary}[No Combinatorial Explosion]
After $m$ arithmetic operations starting from depth-0 pairs,
the maximum nesting depth is at most $m$.  The tree size is $O(m)$,
not $O(2^m)$.
\end{corollary}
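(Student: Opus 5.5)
The depth claim follows by iterating Theorem~\ref{thm:bounded-depth}; the size claim needs a separate argument about how trees are stored. I will induct on the number of operations $m$. The invariant is: every value produced after at most $m$ operations, starting from flat pairs in $Q^{(0)}$, has nesting depth at most $m$. The base case $m=0$ is immediate, since all available values are flat pairs of depth $0$. For the inductive step, the $(m+1)$-st operation takes two operands. Each operand is either an original input of depth $0$ or the output of an earlier operation, so by hypothesis each has depth at most $m$. Theorem~\ref{thm:bounded-depth} then gives a result of depth at most $m+1$, which closes the induction.

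For the size bound, I will not measure size by unfolding each result into a free-standing tree. Under that measure the operations $(ad\pm bc,\,bd)$ reuse $b$ and $d$, so unfolded size could double with each step, and that is exactly the $O(2^m)$ behaviour the corollary rules out. Instead I will measure size as the number of distinct nodes in the shared representation, a DAG with subtree sharing by reference. This is the same reference-identity convention already used in Corollary~\ref{cor:transcendental} and in the remark on compiler sharing of common subexpressions.

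With that convention, I will show each operation allocates only a constant number $c$ of new nodes. The rule $(a,b)\circ(c,d)$ creates at most three products and one sum or difference for the numerator, one product for the denominator, and one pair node. The operands $a,b,c,d$ are pointers to existing nodes and are not copied. Starting from $s_0$ input leaves, the total node count after $m$ operations is therefore at most $s_0 + c\,m = O(m)$ for fixed input.

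The main obstacle is precisely this size statement. It is false under the naive unfolded-tree measure, so the proof must make the sharing convention explicit rather than rely on the depth bound alone. I will state that convention clearly and note that the depth bound holds under either measure.
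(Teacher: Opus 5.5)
Your depth argument is the paper's own reasoning: the corollary is stated without proof right after Theorem~\ref{thm:bounded-depth}, depth $\le m$ is that theorem iterated, and your induction on the number of operations just makes this explicit.

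The size half takes a genuinely different route. The paper never argues size separately. Theorem~\ref{thm:depth_general} repeats the $O(m)$ claim, but its proof only addresses depth, and the text credits linear size to ``the structure of rational pair arithmetic.'' So the paper is in effect reading the size bound off the depth bound. That inference fails. Depth $\le m$ allows $2^m$ nodes, and the operations actually produce them: $x_{j+1}=x_j\div x_j$, whether stored as $(x_j,x_j)$ or as $(ab,ba)$ for $x_j=(a,b)$, doubles the unfolded tree at every step while depth rises by at most one.

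Your fix turns the statement into a true one. You declare that size means distinct nodes in a shared (DAG) representation. You then show that each operation allocates $O(1)$ fresh nodes whose children are pointers to existing nodes. This needs products such as $ad$ to be single symbolic nodes, not expanded by reapplying the pair rules down the operand trees; your ``pointers, not copies'' assumption covers that. The same convention appears elsewhere in the paper: reference identity in Corollary~\ref{cor:transcendental} and shared weight nodes in Theorem~\ref{thm:batch}.

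So the paper's version buys only brevity, and it is false under the unfolded-tree reading. Your version buys a correct proof and shows that linearity comes from sharing, not from anything specific to rational pairs; a hash-consing CAS gets the same bound.

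One small strengthening: you can drop the proviso ``for fixed input.'' Since $m$ binary operations can reference at most $2m$ input pairs, the total node count is $O(m)$ outright for $m\ge 1$.
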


This is the critical property distinguishing nested rational pairs
from generic symbolic expression trees.  In a general CAS, composing
$m$ operations may yield an expression tree of size $2^m$.  Here,
the structure of rational pair arithmetic guarantees linear growth.

\begin{theorem}[Flattening]\label{thm:flatten}
\label{thm:flatten}
Every $Q^{(k)}$ element can be reduced to a flat rational pair
$Q^{(0)}$ in $O(k)$ arithmetic operations using the identities:
\[
\frac{a/b}{c/d} = \frac{ad}{bc}, \qquad
\frac{a/b}{c} = \frac{a}{bc}, \qquad
\frac{a}{b/c} = \frac{ac}{b}.
\]
\end{theorem}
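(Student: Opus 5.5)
The plan is to argue by induction on the nesting depth $k$, flattening bottom-up and using only the three displayed identities. For $k=0$ there is nothing to do. For $k\ge 1$, write an element of $Q^{(k)}$ as $(p,q)$ with $p,q\in Q^{(k-1)}$. By the induction hypothesis, $p$ and $q$ flatten to flat pairs $(a,b)$ and $(c,d)$ with $\mathrm{val}(p)=a/b$ and $\mathrm{val}(q)=c/d$. The first identity $\frac{a/b}{c/d}=\frac{ad}{bc}$ then yields the flat pair $(ad,bc)$ at a cost of exactly two multiplications.

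The second and third identities handle mixed-depth nodes, where one side is already a leaf. These arise in the unbalanced trees produced by the operations of Theorem~\ref{thm:bounded-depth}. Each such node also costs one multiplication.

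Correctness of each step needs only the following: the value is preserved because $b,c,d\neq 0$, and the new denominator $bc$ is non-zero. Both facts follow from the definition of a pair, which requires the value of the denominator subtree to be non-zero, together with the absence of zero-divisors. In the integer case this is the content of Theorem~\ref{thm:fp_nonzero}. So the whole argument is a structural induction whose step is a single application of one identity.

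The main obstacle is the operation count, and I would state it carefully. Read literally, a full element of $Q^{(k)}$ is a complete binary tree with $2^{k+1}$ leaves. The recurrence for sequential work is then $T(k)=2T(k-1)+2$, which gives $\Theta(2^k)$ multiplications, not $O(k)$. I would therefore prove the $O(k)$ bound in one of two senses.

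\begin{enumerate}
\item \emph{Rounds.} All nodes at the same level are independent, so they can be flattened in a single parallel round. This gives $k$ rounds of $O(1)$ depth, i.e.\ $O(k)$ parallel time, which is exactly the SIMD setting of \textsc{RatCleanup}.
\item \emph{Trees from arithmetic.} For trees that actually arise from $m$ arithmetic operations, the Corollary (No Combinatorial Explosion) bounds the tree size by $O(m)$. Along the chain of depth $k$, each internal node is then processed once, so the sequential cost is $O(\text{size})$, which is $O(k)$ when the tree is a spine, i.e.\ each node has one leaf child.
\end{enumerate}

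I would first try the spine case directly, since there only the second and third identities are used, once per level. I would then remark that for general balanced nestings the bound must be read as parallel depth.

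Finally, I would note that cleanup (Theorem~\ref{thm:cleanup}) or pre-multiplication simplification (Theorem~\ref{thm:pre-simplify}) can be interleaved after each step to keep $ad$ and $bc$ within the window of Theorem~\ref{thm:iew}. Doing so does not change the count.
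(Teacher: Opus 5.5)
Your proposal is correct, and on the key point it is more careful than the paper. The paper's proof is a three-line depth count. It claims each application of a flattening identity lowers the nesting depth by at least one, so at most $k$ applications reach depth $0$, and each application is a single multiplication. That argument is exactly your spine case. When every nested node has a leaf child, there is one deepest path. Applying the second or third identity at its lowest nested node removes a level for one multiplication.

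The statement lists those two identities, yet they never apply inside $Q^{(k)}$ as literally defined. This suggests the paper has such chain-like quotient trees in mind. For the complete trees that actually make up $Q^{(k)}$, the paper's step fails once $k\ge 2$, in one of two ways:
\begin{itemize}
\item Applying an identity at one node leaves the other depth-$k$ paths intact, so the global depth does not drop.
\item Applying the first identity at the root, with $ad$ and $bc$ formed as componentwise products of nested pairs, does drop the depth. But a product of two $Q^{(j)}$ elements costs $2^{j+1}$ scalar multiplications, and summed over the levels this is again $2^{k+1}-2$.
\end{itemize}
So your recurrence $T(k)=2T(k-1)+2$ exposes a gap in the paper's argument, not in yours. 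Your two readings, parallel rounds or sequential cost on spines, are what the theorem can honestly claim.

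To make your objection airtight, note that $2^{k+1}-2$ is a lower bound for every schedule, not just the cost of your post-order one. Take the leaves to be indeterminates, i.e.\ require a procedure that works uniformly in the leaf values. Then the target value $N/D$ involves all $2^{k+1}$ leaves. Each binary operation lowers the number of leaf-containing components of the computation graph by at most one. You must end with at most two such components, those of $N$ and $D$. Hence at least $2^{k+1}-2$ operations are needed, and no reordering of the identities restores a sequential $O(k)$ bound.

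Two smaller points.
\begin{itemize}
\item Item~2 does not need the No Combinatorial Explosion corollary. That corollary is only true for the shared DAG, since $(a,b)+(c,d)=(ad+bc,bd)$ duplicates $b$ and $d$, so unshared tree size can double per operation. If you invoke it, flatten with memoisation on shared nodes.
\item In item~1, the first round has $2^{k-1}$ independent nodes. A warp of fixed width $W$ therefore gives $O(2^k/W+k)$ time rather than $O(k)$, which weakens the link you draw to \textsc{RatCleanup}.
\end{itemize}
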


\begin{proof}
Each application of a flattening identity reduces the nesting depth
by at least 1.  Starting from depth $k$, at most $k$ reductions reach
depth 0.  Each reduction is a single multiplication.
\end{proof}

\begin{theorem}[Cross-cancellation Gain]
\label{thm:cross-cancel}
In a depth-2 pair $((a,b),(c,d))$ representing $(a/b)/(c/d) = ad/bc$,
let $g_{ad} = \gcd(a,d)$ and $g_{bc} = \gcd(b,c)$.
The flattened form is $(ad/(g_{ad}g_{bc}),\; bc/(g_{ad}g_{bc}))$
after cancellation, and the numerator-denominator product decreases by
$g_{ad}^2 g_{bc}^2$.
\end{theorem}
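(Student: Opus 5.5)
The plan is to treat the statement as three separate assertions and verify each directly from the definitions, with no induction. The assertions are: (i) the flattened value of $((a,b),(c,d))$ is $ad/bc$; (ii) dividing both components by $G := g_{ad}g_{bc}$ leaves the represented rational unchanged; (iii) the numerator--denominator product shrinks by exactly $G^2 = g_{ad}^2 g_{bc}^2$.

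For (i), I invoke the first identity of Theorem~\ref{thm:flatten}, $\frac{a/b}{c/d} = \frac{ad}{bc}$. This is a single flattening step and yields the flat pair $(ad, bc)$; it needs $b, c, d \neq 0$, which is implicit in the pair being well formed. For (ii), I argue exactly as in Theorem~\ref{thm:cleanup}. We have $g_{ad}, g_{bc} \geq 1$, so $G \neq 0$, and multiplying or dividing numerator and denominator by the same non-zero quantity preserves the value: $\frac{ad/G}{bc/G} = \frac{ad}{bc}$. For (iii), the computation is immediate:
\[
\frac{ad}{G}\cdot\frac{bc}{G} \;=\; \frac{ad\cdot bc}{g_{ad}^2\,g_{bc}^2},
\]
which is the stated reduction factor.

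To explain why this particular $G$ is the natural one to cancel, I would regroup the product as $\frac{a}{g_{ad}}\cdot\frac{d}{g_{ad}}\cdot\frac{b}{g_{bc}}\cdot\frac{c}{g_{bc}}\cdot\frac{1}{1}$. Each of these four quotients is an integer, because $g_{ad}$ divides both $a$ and $d$, and $g_{bc}$ divides both $b$ and $c$. This mirrors Theorem~\ref{thm:pre-simplify}, with the diagonal gcds now taken across the two levels of nesting.

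The main obstacle is integrality of the two individual components $ad/G$ and $bc/G$. The product $ad\cdot bc$ is divisible by $G^2$, as shown above. However, $ad$ alone need not be divisible by $g_{bc}$: for $a = d = 2$ and $b = c = 3$ we get $ad/G = 4/6$. So the statement can only be proved as worded if the components are read as general quantities, for instance as leaves over $D = \mathrm{F64}$ in the sense of Definition~\ref{def:qt}. For such leaves, value preservation and the $G^2$ product reduction hold verbatim, and that is what I would establish. To also get integer components in the $\mathbb{Z}$-window case, I would first try adding the hypothesis $g_{bc} \mid ad$ and $g_{ad} \mid bc$, and remark that it holds, for example, when $\gcd(a,b) = \gcd(c,d) = 1$ fails appropriately.

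A cleaner fix is to cancel $\gcd(ad, bc)$ instead of $G$. By Theorem~\ref{thm:cleanup}, that choice always gives integer components and a product reduction at least as large. I would state this as a closing remark rather than alter the claim.
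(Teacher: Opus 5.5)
Your counterexample is correct. With $a=d=2$ and $b=c=3$, the component $ad/(g_{ad}g_{bc})$ equals $4/6$, so over $\mathbb{Z}$ the statement fails as printed. But you misdiagnose the cause, and that is the gap.

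The paper's proof is one line: Theorem~\ref{thm:pre-simplify} applied to the cross-multiplication. Carried out carefully, this means writing the quotient as $(a,b)\times(d,c)$. The pre-simplification gcds are then $\gcd(a,c)$ and $\gcd(b,d)$. Each gcd must pair a factor of the numerator $ad$ with a factor of the denominator $bc$. The printed $\gcd(a,d)$ and $\gcd(b,c)$ instead pair the two numerator factors with each other and the two denominator factors with each other, which is exactly why integrality breaks. The printed statement copies the labels of Theorem~\ref{thm:pre-simplify}, which is about $(a,b)\times(c,d)$, without swapping $c$ and $d$ for the reciprocal. Your remark that your regrouping ``mirrors Theorem~\ref{thm:pre-simplify}'' repeats this slip instead of catching it.

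With $g_{ac}=\gcd(a,c)$ and $g_{bd}=\gcd(b,d)$, your three checks go through with real content. The components $(a/g_{ac})(d/g_{bd})$ and $(b/g_{bd})(c/g_{ac})$ are integers, their ratio is $ad/bc$, and their product is $abcd/(g_{ac}^2 g_{bd}^2)$.

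Your fallbacks do not substitute for this.
\begin{itemize}
\item Reading the entries as F64 leaves makes the claim true only because dividing both entries of a pair by any nonzero number preserves the quotient and divides the product by its square. $\gcd$ has no meaning there (the paper itself says GCD cleanup is inapplicable outside $\mathbb{Z}$), so nothing about cancellation has been shown.
\item The hypothesis $g_{bc}\mid ad$ and $g_{ad}\mid bc$ is not enough. Take $a=2$, $b=4$, $c=12$, $d=6$: both divisibilities hold, yet $ad/(g_{ad}g_{bc})=12/8$.
\item Cancelling $\gcd(ad,bc)$ does give integers, but that is full cleanup (Theorem~\ref{thm:cleanup}) on the already-formed products, which is what cross-cancellation is meant to avoid. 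Your claim that it reduces the product at least as much is also false relative to $g_{ad}^2g_{bc}^2$: in your own example $\gcd(4,9)=1$ while $g_{ad}^2g_{bc}^2=36$. It is true relative to the corrected factor, because $\gcd(a,c)\gcd(b,d)$ divides both $ad$ and $bc$, hence divides $\gcd(ad,bc)$.
\end{itemize}
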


\begin{proof}
Direct from Theorem~\ref{thm:pre-simplify} applied to the cross-multiplication.
\end{proof}

\begin{example}[Chain rule in backpropagation]
In neural network training, the chain rule gives
$\partial L/\partial x = (\partial L/\partial y)(\partial y/\partial x)$.
As rational pairs with deferred evaluation:
$((\partial L, \partial y), (\partial y, \partial x))$
flattens to $(\partial L \cdot \partial x, (\partial y)^2)$.
If $\partial y$ appears in both numerator of denominator and denominator
of numerator, it cancels---exactly, symbolically.
\end{example}

\subsection{Deferred evaluation and compiler integration}

A compiler may maintain expressions in nested form, deferring
evaluation until a concrete floating-point result is needed.
The assignment operator $=$\/ forces evaluation (flattening followed
by the \texttt{=!} operator (materialise now). By default, $=$ keeps
expressions symbolic.

\begin{theorem}[Stability of Symbolic-Default Evaluation]
\label{thm:deferred}
Let $f: \mathbb{Q}^m \to \mathbb{Q}$ be a rational function computed
via a sequence of $m$ arithmetic operations on inputs
$q_1, \ldots, q_m \in Q_n$.
Evaluating $f$ by deferred evaluation (maintaining the full nested
expression, flattening once at the end) produces a result with error
at most one ULP, whereas eager evaluation (evaluating after each step)
may accumulate up to $m$ ULPs of rounding error.
\end{theorem}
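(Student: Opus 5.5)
Recall that $Q_n$ inputs are rational pairs whose numerators and denominators are integers inside the IEEE exactness window. The proof tracks how many rounding events each evaluation strategy incurs, mirroring the argument of Theorem~\ref{thm:defer_general}. It splits into a deferred half and an eager half.

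\textbf{Deferred half.} First, by Theorem~\ref{thm:bounded-depth} and its corollary, the $m$ operations produce a nested pair of depth at most $m$ and size $O(m)$. By Theorem~\ref{thm:flatten}, it reduces to a flat pair $(N,M)\in Q^{(0)}$ using only multiplications and additions of integer components. Second, we argue that every intermediate integer stays in the window $(-2^{53},2^{53})$. The tools are Theorem~\ref{thm:exact-arith} for the individual operations, together with cleanup (Theorem~\ref{thm:cleanup}) and pre-multiplication simplification (Theorem~\ref{thm:pre-simplify}) to control growth. Then Theorem~\ref{thm:iew} makes each step exact, so $N$ and $M$ are the exact numerator and denominator of $f(q_1,\dots,q_m)$. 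Third, the only floating-point rounding is the single final division $N/M$. It is correctly rounded, so it contributes at most $\tfrac12$ ULP. Adding at most another $\tfrac12$ ULP if the displayed or target format is coarser gives the claimed bound of at most one ULP.

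\textbf{Eager half.} If each of the $m$ operations materialises its result as a double, each step incurs one correctly rounded operation, with up to $\tfrac12$ ULP relative error. Using the standard first-order model $\prod_{i}(1+\delta_i)$ with $|\delta_i|\le u$, the relative error is bounded by $mu+O(u^2)$, i.e.\ $O(m)$ ULPs. To show that ``may accumulate up to $m$ ULPs'' is attained, we exhibit a witness. One candidate is repeated addition of $(1,10)$. Each partial sum $k/10$ is inexact in binary, and the rounding errors are all of the same sign, as the $0.1+0.2$ example suggests.

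\textbf{Main obstacle.} The hard step is keeping the integers of the deferred half inside the exactness window. For a general rational $f$ of $m$ operations, the flattened $N,M$ can be as large as $10^{nm}$. That exceeds $2^{53}$ once $m$ is moderate, and cleanup only helps when common factors actually exist. We handle this by making window-containment an explicit hypothesis: either the flattened magnitudes satisfy $|N|,|M|<2^{53}$, or the reduced result lies in a $Q_{n'}$ with $n'$ small, using the cleanup-as-invariant corollary. Without that hypothesis, the honest statement is the one in Theorem~\ref{thm:defer_general}: the final-step error is independent of $m$ modulo the accuracy of $N$ and $M$ themselves.
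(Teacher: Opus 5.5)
Your deferred/eager split is the paper's own argument. The paper's proof assumes the flattened numerator and denominator lie in $(-2^{53},2^{53})$, so they are exact and the final division is the only rounding. It then counts one rounding per eager step to reach $m$ ULPs. Your `Main obstacle' paragraph arrives at exactly that assumption, so the deferred half goes through as in the paper. The correctly rounded division already gives $\tfrac12$ ULP, so the extra $\tfrac12$ for a coarser display format is unnecessary.

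Keep only the first disjunct of your hypothesis. The cleanup-as-invariant corollary needs every intermediate true result to have a small denominator, not just the final one. It also says nothing about the unreduced $bd$, $ad\pm bc$ or the numerators formed before cleanup. Strictly, you need every integer formed during construction and flattening to stay in the window, since $ad-bc$ can be small while $ad$ and $bc$ are not. The paper's stated assumption, which is only on $p,q$, glosses over this too.

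The step that actually fails is the witness. Accumulating $(1,10)$ eagerly in double gives $0.30000000000000004$ at $k=3$, a tie that round-to-even sends up. The next addition is again a tie but rounds down. Every partial sum with $k\ge 6$ then lies below $k/10$ (e.g.\ $0.7999999999999999$ at $k=8$). So the errors do not share a sign, and after ten terms the total error is about one ULP of the result, not ten; the example shows no attainment.

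The paper never claims attainment. It reads ``may accumulate up to $m$ ULPs'' as a per-step worst-case count, so you can simply drop the witness. A stagnating sum $1+\delta+\cdots+\delta$ with $\delta=(1,10^{16})$, below half an ULP of $1$, does make eager error grow like $0.45\,m$ ULPs. However, its exact numerator $10^{16}+m$ already exceeds $2^{53}$, so it does not exhibit the deferred/eager contrast inside the window.

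Similarly, your $\prod_i(1+\delta_i)$ bound covers only multiplicative chains. With additions, cancellation can amplify relative error far beyond $m u$. So ``$O(m)$ ULPs'' is really a count of roundings measured in the intermediate values' ULPs, which is all the paper's proof asserts as well.
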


\begin{proof}
In deferred evaluation under IEEE~754 round-to-nearest mode,
assuming the flattened numerator $p$ and denominator $q$ remain within
the integer exactness window ($|p|, |q| < 2^{53}$): $p$ and $q$
are exact, and the final IEEE division $p/q$ incurs at most 1~ULP
(IEEE~754 \S4.3.1 guarantee for basic operations).
In eager evaluation, each intermediate IEEE operation on values
possibly outside the integer window may introduce up to 1 ULP;
with $m$ operations, worst-case accumulation is $m$ ULPs.
\emph{Assumption}: the exactness window condition must hold;
if intermediate values exceed $2^{53}$, cleanup must be applied
to restore it, at the cost of one ULP per cleanup boundary.
\end{proof}

\section{Infinity, Zero, and Limit Semantics}
\label{sec:limits}

\begin{definition}[Special Values]
Define:
$\omega = (1, 0)$ (positive infinity), $-\omega = (-1, 0)$
(negative infinity), $\mathbf{0} = (0, 0)$ (undefined/NaN).
For any $x > 0$: $(x, 0) \sim \omega$ (all positive infinities are
equivalent).
\end{definition}

\begin{theorem}[Limit Equality]
\label{thm:limit}
$(a, \omega) = \lim_{d \to \infty} a/d = 0$ for all finite $a$.
Formally, under the limit interpretation, $(a, (1,0))$ equals the
zero pair $(0, 1)$.
\end{theorem}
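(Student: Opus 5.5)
The plan is to read the nested pair $(a,(1,0))$ as an element of $Q^{(1)}$ with the special value $\omega=(1,0)$ in the denominator position, flatten it with Theorem~\ref{thm:flatten}, and then interpret the resulting flat pair through the limit semantics of the Special Values definition. Concretely, write the finite numerator as the flat pair $(a,1)$; equivalently, use the identity $\frac{a}{b/c}=\frac{ac}{b}$ with $b=1$, $c=0$. Formally, flattening gives $(a\cdot 0,\,1)=(0,1)$. This is exactly the zero pair, with no division by zero performed at any step. That is the ``formal'' half of the statement.

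The flattening identity $\frac{a}{b/c}=\frac{ac}{b}$ was justified only for $c\neq 0$, so it cannot be invoked directly when $c=0$. For the second step I would therefore make the limit interpretation precise, and let it carry the real weight. Replace $\omega=(1,0)$ by the family $(d,1)$ with $d\to\infty$; equivalently, use $(1,\varepsilon)$ with $\varepsilon\to 0^{+}$, consistent with the convention that all $(x,0)$ with $x>0$ are equivalent to $\omega$. For $\varepsilon\neq 0$, Theorem~\ref{thm:flatten} applies legitimately. It gives
\[
\bigl(a,(1,\varepsilon)\bigr)\;\mapsto\;(a\varepsilon,\,1),
\]
whose value $a\varepsilon$ tends to $0$ as $\varepsilon\to 0$ for every finite $a$. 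This is the same as $\lim_{d\to\infty} a/d=0$. Since the flattened pair $(a\varepsilon,1)$ depends continuously on $\varepsilon$, the limit pair is $(0,1)$. This agrees with the formal substitution $\varepsilon=0$, which justifies defining $(a,\omega):=(0,1)$.

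Finally, I would check that the definition is well posed under the equivalence $(x,0)\sim\omega$. Using $(x,\varepsilon)$ with $x>0$ instead of $(1,\varepsilon)$ yields $(a\varepsilon,x)$, which still tends to the zero pair; after cleanup, $(0,x)\mapsto(0,1)$ by the cleanup listing. Negative $a$, and $-\omega$ handled symmetrically, give the same result. I also need to avoid the $\mathbf{0}=(0,0)$ trap: that undefined value would arise only if $a$ itself were infinite, which is excluded by the hypothesis that $a$ is finite. The main obstacle is exactly this step of justifying the formal use of a flattening identity at $c=0$. The continuity/limit argument above is how I would handle it, and the theorem should be read as a definition of semantics consistent with that limit, not as an algebraic identity in a field.
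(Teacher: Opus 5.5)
Your proposal is correct and rests on the same fact as the paper's proof: $\lim_{d\to\infty} a/d = 0$, with $\omega$ substituted for $d$. The regularised flattening $(a,(1,\varepsilon))\mapsto(a\varepsilon,1)$, the check over representatives $(x,0)$, and the exclusion of $(0,0)$ are extra scaffolding that the paper omits but that does no harm.

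Two small cautions. First, state the conclusion in terms of values, not as a componentwise limit of pairs: the equally valid family $(d,1)$ flattens to $(a,d)$, whose components do not converge to $(0,1)$. Second, do not present the unregularised $c=0$ flattening as a symbolic rewrite, because that would erase the distinction between exact zero and the infinitesimal $(a,\omega)$ drawn in the corollary that follows.
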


\begin{proof}
$a/d \to 0$ as $d \to \infty$ for fixed finite $a$.
Setting $d = \omega$ (the infinity pair) gives $a/\omega = 0$ in
the limit sense.
\end{proof}

\begin{corollary}
The rational pair system distinguishes:
\begin{itemize}
\item \emph{Exactly zero}: $(0, 1)$---provably zero, not merely small.
\item \emph{Infinitesimal} (limit zero): $(a, \omega)$ for finite $a$---zero
  in the limit but nonzero as a symbolic expression.
\item \emph{Any finite rational}: $(n, d)$ with $d \neq 0$, $d$ finite.
\end{itemize}
This trichotomy is impossible in IEEE float, which conflates
``too small to represent'' with zero.
\end{corollary}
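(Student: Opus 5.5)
The statement is a corollary of the definitions of the special values together with Theorem~\ref{thm:limit}, so I would prove it by checking that the three listed classes are well defined and pairwise disjoint. Concretely, I would sort every pair $(n,d)$ by which of its components is finite, zero, or itself the infinity pair $\omega$.

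\emph{Step 1: exactly zero.} First I would treat $(0,1)$. I would show it is provably zero: its value is the integer quotient $0/1$, and in $\mathbb{Q}_{\mathrm{fp}}$ this is computed by exact integer arithmetic (Theorem~\ref{thm:iew}). So no rounding is involved. I would also note that every cleaned-up pair $(0,d)$ with $d\neq 0$ finite collapses to $(0,1)$ by the cleanup convention.

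\emph{Step 2: finite rationals.} Next I would take $(n,d)$ with $d$ finite and nonzero. Its value $n/d$ is determined exactly by the cross-multiplication equivalence $(n_1,d_1)\sim(n_2,d_2)$. By the Integer Exactness Window, equality tests against $(0,1)$ are exact, so $(n,d)$ equals zero precisely when $n=0$.

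\emph{Step 3: infinitesimals.} I would then take $(a,\omega)$ with $a$ finite and nonzero. By Theorem~\ref{thm:limit} its limit value is $0$. Structurally, though, its denominator is the nested pair $(1,0)$ rather than a finite integer, so it lies in $Q^{(1)}$ and not in $Q^{(0)}$. It is therefore not syntactically the pair $(0,1)$. Since its numerator is nonzero, no cleanup or flattening step (Theorem~\ref{thm:flatten}) turns it into $(0,1)$ without passing through the limit interpretation.

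\emph{Step 4: disjointness.} The main obstacle is showing the classes do not overlap under the arithmetic. If one naively flattens $(a,(1,0))$ by the identity $\frac{a}{b/c}=\frac{ac}{b}$, one gets $(a\cdot 0,1)=(0,1)$, which would merge the infinitesimal class into the exactly-zero class. I would resolve this by restricting Theorem~\ref{thm:flatten} to pairs whose subtrees have nonzero finite denominators. Infinity pairs would be kept as symbolic leaves, with the limit semantics applied only when the limit value is explicitly requested. Under that convention the three classes are distinguished by a decidable syntactic test: the numerator is $0$, or the denominator is $\omega$, or the denominator is finite and nonzero.

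\emph{Step 5: contrast with IEEE.} For the final sentence, I would exhibit a positive real value below $2^{-1074}$. It rounds to $+0$ in IEEE arithmetic, so IEEE cannot separate it from a genuine zero, which is exactly the distinction the trichotomy preserves.
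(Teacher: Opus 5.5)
Your overall route is the one the paper intends: the paper gives no separate proof, and the corollary is read off the special-value definitions and Theorem~\ref{thm:limit}. However, Step~4 would fail as written. The classes listed in the statement are not pairwise disjoint. The third class, $(n,d)$ with $d\neq 0$ finite, contains $(0,1)$, and the second, $(a,\omega)$ for finite $a$, contains $(0,\omega)$. Your syntactic test inherits both overlaps: $(0,1)$ passes ``numerator is $0$'' and ``denominator finite and nonzero'', while $(0,\omega)$ passes ``numerator is $0$'' and ``denominator is $\omega$''. So it does not decide a trichotomy. You must first refine the classes to three cases:
\begin{itemize}
\item numerator $0$ over a finite nonzero denominator, which cleanup normalises to $(0,1)$;
\item a nonzero finite numerator over $\omega$;
\item a nonzero numerator over a finite nonzero denominator.
\end{itemize}
Only then is your test a genuine partition of the pairs it covers, with $(0,0)$, $(x,0)$ and $(0,\omega)$ lying outside all three.

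Your treatment of flattening also rests on a false membership claim in Step~3. Rational pairs are required to have $d\neq 0$, so $\omega=(1,0)\notin\mathbb{Q}_{\mathrm{fp}}=Q^{(0)}$, and hence $(a,\omega)\notin Q^{(1)}$. Theorem~\ref{thm:flatten} is stated for elements of $Q^{(k)}$ and does not cover this pair, so your restriction is consistent with the paper rather than a change to it.

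Be clear, though, about what the separation means. Theorem~\ref{thm:limit} itself says $(a,(1,0))$ equals $(0,1)$ under the limit interpretation. Moreover, the cross-multiplication test of $(a,\omega)$ against $(0,1)$ needs $0\cdot\omega=(0,1)\times(1,0)=(0,0)$, which is the undefined pair. So what you prove is a distinction of representations, not of values. That is precisely the statement's ``nonzero as a symbolic expression'', and your conclusion should be phrased that way.

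Finally, in Step~5, not every positive value below $2^{-1074}$ rounds to $+0$. Under round-to-nearest-even only values at most $2^{-1075}$ do; those in $(2^{-1075},2^{-1074})$ round to the least subnormal. Exhibit a specific underflowing result instead, such as the true product $2^{-550}\cdot 2^{-550}=2^{-1100}$ of two representable doubles.
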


\begin{remark}[Hyperreal connection]
The nested rational pair system admits an interpretation as a subset
of the hyperreals~\cite{robinson1966}.  Flat pairs correspond to
standard reals; pairs with infinite (or infinitesimal) components
correspond to nonstandard elements.  A full treatment of hyperreal
arithmetic in this framework is deferred to future work; the present
paper restricts to the finitary properties needed for computing
applications.
\end{remark}

\section{Applications to Machine Learning}
\label{sec:ml}

\subsection{The vanishing gradient problem}

In training a deep neural network with $L$ layers, the gradient of
the loss with respect to a weight in layer $\ell$ is:
\[
\frac{\partial \mathcal{L}}{\partial W^{(\ell)}}
= \prod_{k=\ell}^{L} \frac{\partial a^{(k)}}{\partial a^{(k-1)}}
\cdot \frac{\partial \mathcal{L}}{\partial a^{(L)}},
\]
where $a^{(k)}$ is the activation at layer $k$.
In IEEE float, if each factor $|\partial a^{(k)}/\partial a^{(k-1)}| < 1$,
the product underflows to zero after approximately $-\log_2(\text{min\_float}) / \log_2(1/c)$
layers, where $c < 1$ is a typical factor magnitude.
For sigmoid activations with $c \approx 0.25$, this is about 126 layers
in single precision.

\begin{theorem}[Vanishing Gradient Impossibility under RPA]
\label{thm:no-vanish}
Under rational pair arithmetic with exact arithmetic (inputs within
the integer exactness window), a product of nonzero rational pairs
is nonzero.  In particular, gradient products in backpropagation
cannot underflow to zero.
\end{theorem}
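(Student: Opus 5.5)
The plan is to reduce the statement to the fact that $\mathbb{Z}$ is an integral domain, using the exactness guarantees of Theorem~\ref{thm:iew} and Theorem~\ref{thm:exact-arith}. Consider nonzero rational pairs $(n_1,d_1),\ldots,(n_m,d_m)$, meaning $n_i\neq 0$ and $d_i\neq 0$ for every $i$. By the definition of multiplication, their product is the pair $\bigl(\prod_i n_i,\ \prod_i d_i\bigr)$, or some pair representing the same rational if cleanup or pre-multiplication simplification (Theorems~\ref{thm:cleanup} and~\ref{thm:pre-simplify}) is interleaved. I would proceed by induction on $m$ and prove the invariant that the running numerator is a nonzero integer and the running denominator is a nonzero integer, both inside the window $(-2^{53},2^{53})$.

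\emph{Inductive step without cleanup.} Suppose $(N,D)$ has $N,D\neq 0$ and we multiply by $(n,d)$. The hypothesis ``inputs within the integer exactness window'' has to be read as saying that the exact products $Nn$ and $Dd$ stay below $2^{53}$ in magnitude. This is the standing condition of Theorem~\ref{thm:exact-arith}(a). Under it, the IEEE multiplication returns the mathematical integers $Nn$ and $Dd$ exactly, with no rounding. Because $\mathbb{Z}$ has no zero divisors and $N,n\neq0$, we get $Nn\neq 0$, and likewise $Dd\neq0$. This is the $D=\mathbb{Z}$ case of Theorem~\ref{thm:fp_nonzero}. I would stress here that underflow is a rounding phenomenon: a computed result can only become $0$ when the exact result is nonzero but gets rounded. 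Exactness rules that out.

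\emph{Inductive step with cleanup.} If the product would leave the window, cleanup is applied. Cleanup divides both components by $g=\gcd(|n|,|d|)\ge1$, and since $g$ divides $n$ exactly, $n/g$ is a nonzero integer whenever $n\neq 0$. Theorem~\ref{thm:cleanup} then gives that the value is preserved and the magnitudes shrink. Pre-multiplication simplification behaves the same way: $a/g_1$ and $c/g_2$ are nonzero integers, so their exact product is nonzero. Finally, a pair with nonzero numerator and nonzero denominator represents a nonzero rational, so it will never compare equal to $(0,1)$ under the exact comparison of Theorem~\ref{thm:exact-arith}(c). For the application to gradients, each Jacobian factor $\partial a^{(k)}/\partial a^{(k-1)}$ is a nonzero rational pair, so the product is nonzero by the induction.

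The main obstacle is the precise scope of the window hypothesis. The statement cannot hold unconditionally: repeated multiplication grows numerators and denominators geometrically, and cleanup only helps when common factors actually exist. So the hardest step is formulating the hypothesis so that every intermediate product, after any cleanup, is known to lie in the window. My first attempt would be to state it explicitly as an assumption on the reduced intermediate values, in the spirit of the corollary on cleanup as invariant maintenance. I would then remark that when the window is exceeded, the result is no longer exact but also still not zero in the rounding sense, because overflow produces a large value rather than zero. That remark would be kept separate from the exact claim.
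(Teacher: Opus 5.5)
Your proposal is correct and follows the paper's argument. The numerator of the product is a product of nonzero integers, exactness inside the $2^{53}$ window rules out rounding, $\mathbb{Z}$ has no zero divisors, and division by a GCD cannot introduce a zero. The one difference is in your favour: you state the window condition explicitly as a hypothesis on the reduced intermediate values, whereas the paper asserts that cleanup \emph{enforces} $|\prod_k a_k| < 2^{53}$, which cannot hold in general because cleanup does nothing to a pair already in lowest terms whose entries exceed the window.
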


\begin{proof}
We assume cleanup is applied after each operation (Theorem~\ref{thm:cleanup})
so that all values remain within the integer exactness window.
For rational pairs $(a_k, b_k)$ with $a_k \neq 0$ and $b_k \neq 0$
for all $k$: the numerator of the product is $\prod_k a_k$ after
GCD reduction.  Since $a_k \in \mathbb{Z} \setminus \{0\}$ and
$|\prod_k a_k| < 2^{53}$ (enforced by cleanup), this is a nonzero
integer exactly represented in IEEE double.  It cannot equal zero
since integer multiplication of nonzero values is nonzero, and
cleanup (GCD division) cannot introduce zeros.

\emph{Caveat}: if cleanup is not applied and numerators exceed $2^{53}$,
IEEE double arithmetic may round a nonzero integer to zero.  The
theorem holds under the assumption that cleanup maintains the exactness
invariant.
\end{proof}

\begin{corollary}
Under RPA with cleanup (so all values remain in the integer exactness window),
the vanishing gradient problem cannot occur by arithmetic underflow.
Architectural techniques introduced to address vanishing gradients---
skip connections~\cite{he2016resnet}, gating in LSTMs~\cite{lstm},
gradient clipping---may therefore be unnecessary \emph{for that specific
reason}, though they remain useful for other purposes (increasing
expressive power, accelerating convergence, addressing exploding gradients).
\end{corollary}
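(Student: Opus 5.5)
The statement immediately above is the corollary to Theorem~\ref{thm:no-vanish}. I will derive it directly from that theorem, reading ``vanishing by arithmetic underflow'' as the event that a computed gradient component is stored as zero when its true value is nonzero. The plan is to model backpropagation as the product
\[
\frac{\partial \mathcal{L}}{\partial a^{(L)}} \cdot \prod_{k=\ell}^{L} \frac{\partial a^{(k)}}{\partial a^{(k-1)}},
\]
with every factor represented as a rational pair $(a_k,b_k)$ in which $a_k,b_k$ are nonzero integers inside the window of Theorem~\ref{thm:iew}.

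\textbf{Step 1: reduce to zero-free factors.} If some factor's true value is zero, the true gradient is zero, so nothing vanishes spuriously. I may therefore assume every factor has $a_k\neq 0$.

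\textbf{Step 2: induct over the partial products.} I will show that each partial product $P_j$ is a nonzero cleaned pair that lies in the window. The base case is a single nonzero factor. For the inductive step, $P_{j+1}$ is formed by the multiplication rule and then cleaned.
\begin{itemize}
\item Theorem~\ref{thm:pre-simplify} shows that pre-multiplication simplification keeps the intermediate values as small as possible.
\item Theorem~\ref{thm:cleanup} shows that dividing by $g=\gcd\ge 1$ preserves the value and the nonzeroness of the numerator.
\item The standing hypothesis ``all values remain in the window'' lets me apply Theorem~\ref{thm:exact-arith}(a), so the IEEE products are exact.
\end{itemize}
Exact integer multiplication of nonzero integers is nonzero because $\mathbb{Z}$ is an integral domain; this is the $D=\mathbb{Z}$ case of Theorem~\ref{thm:fp_nonzero}. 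Hence $P_{j+1}\neq 0$. This step is exactly the content of Theorem~\ref{thm:no-vanish}, which I invoke rather than reprove.

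\textbf{Step 3: the comparative clause.} The claims that skip connections, LSTM gating and clipping ``may be unnecessary for that specific reason'' are not mathematical assertions requiring proof. I will note that Step 2 removes only the underflow mechanism. Other mechanisms, namely small-but-nonzero gradients (slow convergence) and exploding magnitudes, are untouched, and this justifies the hedged phrasing.

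\textbf{Main obstacle.} The main obstacle is the hypothesis itself. Cleanup keeps values in the window only when the true reduced numerator and denominator fit in it. A product of $L$ factors with coprime parts has a reduced denominator of order $\prod b_k$, which exceeds $2^{53}$ after few layers. I will therefore state the corollary conditionally, as the paper does with ``so all values remain in the integer exactness window.'' In that form the only possible failure is window overflow, which is a detectable exactness violation and not a silent zero. I would also remark that overflow yields a large, nonzero, or inexact value rather than $0$ whenever $|n|$ exceeds $2^{53}$, so overflow does not reintroduce silent vanishing. This last point is the step I am least sure can be made fully rigorous.
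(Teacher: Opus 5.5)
Your proposal is correct and takes the paper's route: the corollary is read off directly from Theorem~\ref{thm:no-vanish}, whose proof is exactly your Step~2 (under the window hypothesis the products are exact, a product of nonzero integers is nonzero, and division by $g=\gcd\ge 1$ cannot create a zero), while the clause about skip connections, gating and clipping is interpretive rather than something to prove. Drop the closing aside about overflow, which the hypothesis makes unnecessary and which is only half right: a numerator of magnitude above $2^{53}$ indeed never rounds to $0$, but a denominator can overflow to IEEE $+\infty$, after which materialising $n/d$ gives exactly $0$, so leaving the window can reintroduce silent vanishing.
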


\subsection{Determinism across hardware}

\begin{theorem}[Cross-hardware Determinism]
\label{thm:determinism}
Under RPA with exact arithmetic (all values in the integer exactness
window), and for a \emph{fixed computation order}, the result is
independent of: hardware floating-point rounding mode, FMA availability,
and CUDA/driver version.
\end{theorem}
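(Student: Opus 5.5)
The plan is to reduce determinism to exactness: if every basic IEEE operation the computation performs returns the exact mathematical result, then that result is a fixed integer which every conforming implementation must produce, whatever its configuration. The argument will be an induction on the fixed sequence of operations. The invariant is that after each step, every live value is the exact integer (numerator or denominator) that exact rational arithmetic would produce at that step.

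\emph{Step 1 (per-operation exactness).} The operations available in a fixed computation order are the four rational pair operations of Section~\ref{sec:rpa}, comparison via cross-multiplication, and cleanup. Each reduces to IEEE multiplications, additions, subtractions, exact integer divisions by $g$, and the integer-only $\gcd$.

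\begin{itemize}
\item By Theorem~\ref{thm:iew} and Theorem~\ref{thm:exact-arith}, under the window hypothesis every product and sum of components lies in $(-2^{53},2^{53})$. The result is therefore representable exactly.
\item IEEE~754 requires each basic operation to return the correctly rounded value of the exact result. When the exact result is representable, every rounding mode (nearest-even, toward $0$, toward $\pm\infty$) returns it unchanged.
\item For cleanup, the divisions $n/g$ and $d/g$ have exact integer quotients, so they are likewise mode-independent. The $\gcd$ is computed with integer operations, which are deterministic.
\end{itemize}

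\emph{Step 2 (FMA).} A fused multiply-add computes $ab+c$ with one rounding, whereas separate operations round twice. Under the hypothesis both $ab$ and $ab+c$ are exact in-window integers, so both routes return the same value. Contraction decisions made by the compiler or driver therefore cannot change the result.

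\emph{Step 3 (driver/library independence).} CUDA and driver versions affect results only through rounding mode, contraction, and operation ordering. Ordering is fixed by hypothesis, and the other two were handled above. Any remaining variation would require an implementation returning something other than the correctly rounded result of a basic operation. IEEE conformance excludes this.

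The main obstacle is making the window hypothesis precise enough to cover every intermediate, not just the inputs and outputs. This includes the summands $ad$ and $bc$ before they are added, and the values produced inside the GCD loop. I would state the hypothesis as ``every intermediate arising in exact evaluation has magnitude $<2^{53}$'', which the statement's phrase ``all values in the integer exactness window'' licenses. I would then note that the GCD loop only produces values bounded by its inputs, so it stays in the window automatically.

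Under this hypothesis the induction closes: each step's output is the exact value, which is independent of the platform. I would also remark that fixed order is essential for a reason beyond rounding. Cleanup is not applied uniformly when the order changes, so reordered computations can pass through different intermediate representations, even though the final value of the reduced pair would still agree.
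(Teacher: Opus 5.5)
Your proposal is correct and follows the same route as the paper, which simply observes that within the integer exactness window every operation is exact (Theorem~\ref{thm:iew}), so under a fixed order no rounding choice is ever made and the result cannot depend on rounding mode or driver. Your version spells out what the paper leaves implicit: the window hypothesis must cover every intermediate (including $ad$, $bc$ and the cleanup quotients), correctly rounded operations return representable values unchanged in every rounding mode, and FMA contraction is harmless when both $ab$ and $ab+c$ are exact.
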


\begin{proof}
Within the integer exactness window, each operation is exact by
Theorem~\ref{thm:iew}.  Given a fixed order of operations, each
intermediate result is uniquely determined; no rounding choice is made.
Hence the final result is independent of hardware rounding mode and
driver version.
\emph{Note}: if the compiler is permitted to reorder operations
(e.g.\ reorder floating-point additions), results may differ across
compilers even in this setting.  Determinism requires a fixed
evaluation order, which \texttt{=!} (deferred assignment) provides.
\end{proof}

This is a strong result.  Current neural network training is
notoriously non-deterministic across GPUs, GPU generations, batch sizes,
and even runs on the same hardware~\cite{pytorch}.
RPA training would be exactly reproducible, given a fixed computation order (see Theorem~\ref{thm:determinism}).

\subsection{Exact convergence criteria}

IEEE float convergence tests of the form
$|\mathcal{L}_t - \mathcal{L}_{t-1}| < \varepsilon$
are heuristics: the threshold $\varepsilon$ must be chosen empirically
and differently for each problem.
Under RPA:

\begin{theorem}[Exact Convergence Test]
$\mathcal{L}_t = \mathcal{L}_{t-1}$ (as rational pairs) if and only
if the training loss has not changed by any rational amount since the
previous step.
\end{theorem}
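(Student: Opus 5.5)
The plan is to reduce equality of rational pairs to equality of the rational numbers they represent, and then read the biconditional off that reduction. Write $\mathcal{L}_t = (n_t, d_t)$ and $\mathcal{L}_{t-1} = (n_{t-1}, d_{t-1})$ with $d_t, d_{t-1} > 0$. I would first fix what ``equal as rational pairs'' means. Under the comparison rule of the Rational Pair Arithmetic definition, the natural choice is the localisation equivalence of Section~\ref{sec:gpa}: $(n_t,d_t) = (n_{t-1},d_{t-1})$ iff $n_t d_{t-1} = n_{t-1} d_t$. Alternatively, if cleanup (Theorem~\ref{thm:cleanup}) has been applied to both values and signs are normalised so that $d>0$, then literal componentwise equality is equivalent to this, because reduced forms of rationals are unique. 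I would state both readings and show that they agree in the cleaned-up case, using $\gcd(n/g,d/g)=1$ from Theorem~\ref{thm:cleanup} together with unique factorisation in $\mathbb{Z}$.

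The second step is exactness of the test itself. I would invoke Theorem~\ref{thm:exact-arith}(c), under the standing assumption that all values lie in the integer exactness window maintained by cleanup. This shows that the IEEE computation of $n_t d_{t-1} - n_{t-1} d_t$ is exact, so the machine test returns \textsf{true} iff the true integer $n_t d_{t-1} - n_{t-1} d_t$ is $0$. That holds iff $n_t/d_t = n_{t-1}/d_{t-1}$ in $\mathbb{Q}$, i.e.\ iff $\mathcal{L}_t - \mathcal{L}_{t-1} = 0$ as a rational number. Both directions of the biconditional then follow directly. For ($\Rightarrow$), equal pairs mean the difference is exactly $0$, so the loss has changed by no nonzero rational amount. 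For ($\Leftarrow$), if the loss has changed by no nonzero rational amount, the difference, which is itself rational, must be $0$, and therefore the cross-multiplication test succeeds.

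The main obstacle is making the hypothesis precise rather than any calculation. The statement silently requires that $\mathcal{L}_t$ and $\mathcal{L}_{t-1}$ were computed exactly, so that each pair's value equals the true training loss. This holds only if every operation in the forward pass stays within the window, in the sense of Theorem~\ref{thm:qn-exact}, and involves no transcendental leaves. Without that, ``the training loss has not changed'' refers to the true loss, while the pair equality refers to the computed one. I would therefore add the exactness-window assumption explicitly, as in the proof of Theorem~\ref{thm:no-vanish}. I would also note that if any cleanup rescaling or overflow occurred, only the computed quantity is certified, and the biconditional then speaks about the represented values.
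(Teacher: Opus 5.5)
The paper gives no proof of this statement. The theorem is followed only by the remark that it ``eliminates false convergence \ldots\ and false divergence.'' The reasoning it leaves implicit is the cross-multiplication comparison from the definition of rational pair arithmetic, combined with Theorem~\ref{thm:exact-arith}(c). Your proposal turns that into an actual argument, and it is correct under the hypotheses you add. Those hypotheses are exactly what the bare statement hides:
\begin{itemize}
\item \emph{Which equality of pairs is meant.} Componentwise equality without a canonical form makes $(1,2)$ and $(2,4)$ unequal, which is precisely the false divergence the remark claims to exclude.
\item \emph{What ``any rational amount'' means.} It must mean any nonzero rational amount, since $0$ is rational.
\item \emph{Which loss the right-hand side refers to.} It speaks about the true loss, so that loss must be computed exactly and must itself be rational. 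With $\log$- or softmax-based losses the ($\Leftarrow$) direction can fail: the true losses may differ by an irrational amount, which the right-hand side does not register, while the computed pairs differ.
\end{itemize}

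\textbf{Step to tighten: the window is not (c)'s hypothesis.} Your standing assumption $|n|,|d|<2^{53}$ does not give exact cross products. Theorem~\ref{thm:exact-arith}(c) guarantees that $n_t d_{t-1}$ and $n_{t-1} d_t$ are exact only when the largest of the four components has square below $2^{53}$. With components near $2^{40}$, two pairs with $n_t d_{t-1}-n_{t-1}d_t=1$ give products near $2^{80}$ that can round to the same double. That is a false convergence. Also, even under (c) the difference $n_t d_{t-1}-n_{t-1}d_t$ can exceed $2^{53}$ in magnitude, so it need not be computed exactly. What you actually need is only that the two products are exact. Comparing them directly is then exact, and so is testing the rounded difference against $0$, since a difference of distinct integers has magnitude at least $1$ and cannot round to zero.

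\textbf{Suggested repair.} Make your second reading the primary one. After cleanup with sign normalised so that $d>0$, reduced forms are unique by the unique-factorisation argument you sketch. The machine test then becomes componentwise equality of doubles, which involves no arithmetic and is exact on the whole window.

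\textbf{Cleanup is not a source of error.} Exact GCD cleanup preserves value (Theorem~\ref{thm:cleanup}), so it does not create a discrepancy between computed and true loss. The real requirement, as in Theorem~\ref{thm:no-vanish}, is that the true reduced numerators and denominators never leave the window. Cleanup cannot shrink a pair that is already reduced.
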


This eliminates false convergence (where float rounding makes two
different values compare equal) and false divergence (where rounding
makes equal values compare unequal).

\subsection{New neural architectures}

Theorem~\ref{thm:no-vanish} opens architectural possibilities previously
precluded by float:

\begin{enumerate}
\item \textbf{Very deep plain networks}: Hundreds of layers without skip
  connections, residual streams, or gating---gradient flow is guaranteed.
\item \textbf{Exact attention}: Softmax involves division of exponentials;
  in RPA, each attention weight is a rational pair and comparisons are exact.
\item \textbf{Provable monotone training}: The loss can be proved
  non-increasing (within exact arithmetic) without float noise
  introducing spurious oscillations.
\end{enumerate}

\section{Vectorisation and GPU Integration}
\label{sec:gpu}

\subsection{Current GPU arithmetic}

Modern GPUs (NVIDIA H100, AMD MI300) execute SIMD operations on
vectors of 32 (warp) or 64 (wavefront) values.  The dominant
arithmetic is IEEE~754 single- or half-precision, with
\textsc{TensorCore} units providing $4\times4$ matrix multiply-accumulate
in 1 clock cycle.

\subsection{RPA on current hardware}

\paragraph{Flat RPA.} A rational pair $(n_i, d_i)$ occupies
16~bytes (two INT64 values).  A warp of 32 pairs occupies 512~bytes,
which fits in 8 cache lines.  Arithmetic on flat pairs uses
\texttt{INT64} \textsc{MUL}/\textsc{ADD}/\textsc{DIV} instructions,
which execute in 4--8 clock cycles on current NVIDIA hardware.
Cleanup (GCD) takes approximately 15--20 clock cycles per pair via
the binary GCD algorithm, vectorisable across the warp.

\paragraph{Throughput estimate.} Rough analytical estimates compared to FP32 FMAD
(based on published~\cite{cuda} INT64 instruction latencies;
empirical benchmarks are future work):
\begin{itemize}
\item Addition: $\sim$10--15$\times$ slower (two muls, one add, one GCD).
\item Multiplication: $\sim$5--8$\times$ slower (two muls, one GCD).
\end{itemize}
This is broadly comparable to FP64 throughput, and orders of magnitude
faster than software rational arithmetic libraries.

\subsection{The \textsc{RatCleanup} primitive}

\begin{definition}[\textsc{RatCleanup} instruction (proposed)]
\textsc{RatCleanup}$(v_n, v_d)$: given two INT64 vectors $v_n, v_d$
of the same width $W$, compute $g_i = \gcd(v_n[i], v_d[i])$ for all
$i$, and return $(v_n / g, v_d / g)$ elementwise.
\end{definition}

We propose this as a new ISA extension for x86-64 (analogous to
\textsc{VPMADD52}) and CUDA PTX.  The binary GCD algorithm can be
vectorised with SIMD popcount (\texttt{BSF}/\texttt{LZCNT}), achieving
$W$ GCDs per 10--15 clock cycles on a 256-bit SIMD unit.

\begin{theorem}[Warp Cleanup Throughput]\label{sec:throughput}
With \textsc{RatCleanup} available, a warp of 32 rational pair
additions executes in:
\begin{align*}
T_{\text{add}} &= T_{\text{mul}} + T_{\text{add}} + T_{\text{cleanup}}\\
&= 4 + 2 + 12 = 18 \text{ cycles (estimated)}
\end{align*}
compared to 1 cycle for FP32 FMAD.  For scientific computing where
exact comparison and gradient stability are required, this is the
appropriate baseline.
\end{theorem}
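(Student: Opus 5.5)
The estimate is a latency accounting over the critical path of one rational pair addition, executed in SIMT fashion across the warp. All 32 lanes proceed in lockstep, so the warp cost equals the single-lane cost. I will decompose $(a,b)+(c,d) = (ad+bc,\,bd)$ into dependent stages, assign each a cycle count taken from the INT64 latencies quoted in the previous subsection, and sum along the longest dependency chain. The reading of the equation is that the left-hand $T_{\text{add}}$ is the cost of a rational pair addition, while the $T_{\text{add}}$ on the right is the cost of a single INT64 addition.

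\textbf{Multiplication stage.} The three products $ad$, $bc$ and $bd$ are mutually independent. I will argue that they issue back-to-back in the pipeline and overlap, so this stage contributes one multiply latency, $T_{\text{mul}}\approx 4$ cycles. That is the low end of the 4--8 cycle range quoted for INT64 \textsc{MUL}.

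\textbf{Addition stage.} Forming $ad+bc$ depends on both products and costs one INT64 add, $T_{\text{add}}\approx 2$ cycles. The denominator $bd$ is ready no later than this, so it does not lengthen the chain.

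\textbf{Cleanup stage.} I will invoke \textsc{RatCleanup} as a single vector instruction on $(ad+bc,\,bd)$. Its latency is taken from the proposed binary-GCD vectorisation: 10--15 cycles using \texttt{LZCNT}/\texttt{BSF} to strip shared powers of two, plus the two exact divisions $n/g$ and $d/g$. I take 12 cycles as the midpoint. Correctness of this stage, meaning it preserves the value and keeps the result in the window, is supplied by Theorem~\ref{thm:cleanup} together with Theorem~\ref{thm:exact-arith}(b), provided the inputs satisfy $M\le 2^{25.9}$. Summing the three stages gives $4+2+12=18$ cycles. The comparison with a single FP32 FMAD (throughput one per cycle) is then immediate.

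The main obstacle is the cleanup figure. \textsc{RatCleanup} is a proposed instruction and not an existing one, so its latency cannot be derived; it can only be bounded. The binary GCD on 64-bit operands needs up to $O(\log)$ iterations in the worst case, roughly 64 or more shift/subtract steps, not 12. To justify the constant I would argue that a hardware implementation can unroll the loop or process several bits per step, in the way tensor cores fuse operations. I would state explicitly that 12 is an average-case or pipelined-throughput figure and not a worst-case latency. The result should therefore be presented as an estimate conditional on that hardware assumption, consistent with the word ``estimated'' in the statement.
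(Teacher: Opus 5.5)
Your proposal matches the paper's own justification, which is nothing more than the displayed sum with no separate proof: one INT64 multiply latency (4, the low end of the quoted 4--8 cycles), one INT64 add (2), and a 12-cycle \textsc{RatCleanup} taken from the proposed 10--15 cycle vectorised binary GCD. Your reading of the two $T_{\text{add}}$'s, your dependency-chain reason for counting a single multiply, and your flagging of the 12 cycles as an assumption rather than a derivable bound all spell out what the paper leaves implicit.

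There are two tensions in your framing, and the paper's statement shares both. First, under your own lockstep premise a data-dependent GCD costs the warp the \emph{maximum} over its 32 lanes, not the average. So the 12 cycles must be the fixed worst-case latency of a hypothetical unrolled \textsc{RatCleanup}, not an average-case figure. Second, the resulting 18-cycle critical-path \emph{latency} is compared with the one-per-cycle \emph{throughput} of FMAD. In a genuine throughput count the three independent multiplies would each need their own issue slot, so they could not be collapsed into a single $T_{\text{mul}}$.
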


\paragraph{Comparison with quantisation.}
INT8 quantisation achieves 4$\times$ throughput improvement over FP32
by reducing word width but introduces systematic quantisation error.
RPA with \textsc{RatCleanup} achieves approximately $18\times$ overhead
over FP32 but provides \emph{exact} arithmetic, exact comparison, and
structural gradient stability.  For high-precision scientific
applications, this tradeoff is preferable to quantisation error.

\subsection{Compiler integration}

A compiler aware of RPA types can:
\begin{enumerate}
\item Schedule cleanup operations at warp boundaries (loop-level).
\item Hoist deferred evaluations across loop iterations.
\item Specialise for fixed-denominator cases ($Q_2$, $Q_5$) using
  cheaper fixed-point arithmetic internally.
\item Detect when values are known to be small enough for the integer
  exactness window and prove no cleanup overflow is possible.
\end{enumerate}

The U programming language implements these as type-directed
transformations: \texttt{Q5} values use specialised fixed-denominator
paths, \texttt{Q32n} values maintain symbolic expression trees with
depth bounded by 32, and the \texttt{=!} (deferred assignment)
operator controls when materialisation occurs.

\section{Nested RPA in Depth}
\label{sec:nested-deep}

\subsection{Symbolic computation without a CAS}

Computer algebra systems (Mathematica, SageMath) provide exact symbolic
manipulation but are heavy, separate tools.  Nested RPA is a lightweight
alternative for \emph{rational} symbolic expressions:

\begin{theorem}[Rational Function Evaluation]\label{thm:rational_fn}
Let $f(x_1,\ldots,x_m)$ be a rational function computed by a
straight-line program of length $L$ using the four arithmetic
operations. Evaluated at rational pair inputs, $f$ can be maintained
as a nested rational pair of depth $\leq L$ and flattened to a
flat pair in $O(L)$ operations.
\end{theorem}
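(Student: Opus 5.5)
The plan is to argue by induction along the straight-line program, using Theorem~\ref{thm:bounded-depth} for the depth bound and Theorem~\ref{thm:flatten} for the flattening cost. Write the program as a sequence of registers $r_1,\ldots,r_L$. Each $r_i$ is either an input $x_j$ or $r_i = r_s \circ r_t$ with $s,t<i$ and $\circ\in\{+,-,\times,\div\}$. Each input $x_j$ is a flat pair in $Q^{(0)}$. The invariant to prove is that the nested rational pair $R_i$ held for $r_i$ has depth at most $i$ and satisfies $\mathrm{val}(R_i)=r_i$ as a rational number.

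\textbf{Step 1 (depth).} For the induction step, $R_s$ and $R_t$ have depths at most $s,t\le i-1$. Theorem~\ref{thm:bounded-depth} then bounds the depth of $R_i=R_s\circ R_t$ by $\max(s,t)+1\le i$. Taking $i=L$ gives depth $\le L$ for the output.

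\textbf{Step 2 (value).} Correctness of the value follows from the defining formulas of rational pair arithmetic, which are the usual field identities for $a/b \circ c/d$, combined with the inductive hypothesis. I will also record the side condition that every denominator subtree has nonzero value. This holds whenever $f$ is defined at the input, because a $\div$ step with divisor $r_t$ needs $\mathrm{val}(R_t)\neq 0$. If $f$ is undefined there, the hypothesis fails and the statement says nothing.

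\textbf{Step 3 (flattening).} Applying Theorem~\ref{thm:flatten} to $R_L\in Q^{(L)}$ gives a flat pair in $O(L)$ operations. Here lies the main obstacle. Read literally, each identity in Theorem~\ref{thm:flatten} removes one level only at the root, while the two children of a depth-$L$ node may themselves be deep. A naive count therefore flattens once per internal node, and the tree $R_L$ can be exponentially large once shared subexpressions $r_s$ are copied.

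To get around this, I would first try to do the flattening along the program rather than on the expanded tree. Keep a flat pair $F_i$ alongside each $R_i$, with $F_i$ obtained from $F_s,F_t$ by one flat operation of the four-operation table. This is one flattening step of Theorem~\ref{thm:flatten} per program line, i.e.\ $O(1)$ multiplications and additions each, so $O(L)$ in total. One then checks that $\mathrm{val}(F_i)=\mathrm{val}(R_i)$ by the same identities as in Step 2. In effect the DAG is flattened with memoisation, and the count becomes program length rather than tree size, matching the ``tree size $O(m)$'' claim of the corollary to Theorem~\ref{thm:bounded-depth} provided subtrees are shared by reference.

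I would add a remark that exactness of the flat result still requires the intermediate integers to stay in the window of Theorem~\ref{thm:iew}, with cleanup (Theorem~\ref{thm:cleanup}) applied as needed. This bears only on exactness and does not affect the depth or operation count.
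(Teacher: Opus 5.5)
Your Step 1 is the paper's argument: its proof is a structural induction on the computation graph citing Theorem~\ref{thm:bounded-depth}. The proof never addresses the $O(L)$ flattening claim, so that claim implicitly rests on Theorem~\ref{thm:flatten}, where depth $\le L$ gives $O(L)$ reductions. Its closing remark, that the degree of $f$ bounds the number of operations, plays no role and is false as stated, since any program can be padded with redundant steps.

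Your Step 3 is where you differ, and it is a genuinely different and more careful route. You observe, correctly, that the identities of Theorem~\ref{thm:flatten} act one node at a time. The real cost of flattening is therefore governed by the number of nodes, not the depth, and the expanded tree can have $2^L$ leaves (e.g.\ $r_{i+1}=r_i\times r_i$). You replace the depth count by memoised flattening of the shared DAG, carrying a flat $F_i$ with $\mathrm{val}(F_i)=\mathrm{val}(R_i)$ line by line.

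What your route buys:
\begin{itemize}
\item an operation bound genuinely tied to program length;
\item an argument that uses only values, so it does not depend on how the nested operations lay out their subtrees (which the paper leaves vague).
\end{itemize}
What it costs:
\begin{itemize}
\item it needs subtrees shared by reference;
\item it shows the nested form is not actually used to obtain the flat pair, since you are simply running the program in flat rational pair arithmetic.
\end{itemize}
The paper's route is shorter but inherits the looseness of Theorem~\ref{thm:flatten}.

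One small correction to Step 2: the side condition should be that the program divides by no zero value at the given input, not that $f$ is defined there. For example, the program $x(x-1)/(x-1)$ computes a rational function defined at $x=1$ while dividing by zero there.
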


\begin{proof}
By structural induction on the computation graph of $f$.
Each arithmetic operation increases depth by at most 1
(Theorem~\ref{thm:bounded-depth}).  The degree of $f$ bounds the
number of operations in any straight-line computation.
\end{proof}

\subsection{Automatic differentiation as a typed primitive}
\label{sec:ad}

Forward-mode automatic differentiation augments each value $x$
with its derivative $\dot{x} = dx/dt$ as a dual number $(x, \dot{x})$.
Rational pair AD stores $(x, \dot{x})$ as a pair of nested rational pairs.
Since the derivative of a rational function is rational, this is exact:

\begin{proposition}[Exact Rational AD]
\label{prop:exact_ad}
For a rational function $f$ with rational inputs, forward-mode AD using
nested rational pairs produces the \emph{exact} rational derivative.
\end{proposition}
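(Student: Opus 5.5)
The plan is a structural induction on the straight-line program computing $f$, with an invariant stated over dual numbers whose components are rational pairs. I take a forward-mode AD value to be a dual pair $(x,\dot x)$ with $x,\dot x \in \mathbb{Q}$, each stored as a (possibly nested) rational pair. The invariant I will carry is: after step $j$ of the program, the stored pair $(v_j,\dot v_j)$ satisfies $v_j = g_j(q)$ and $\dot v_j = \partial g_j/\partial t\,(q)$ \emph{exactly}. Here $g_j$ is the rational function computed by the first $j$ steps, and $t$ is the seeding direction, with inputs seeded as $(q_i, e_i)$ and $e_i \in \{(0,1),(1,1)\}$. The base case holds by this seeding, since $q_i$ and $e_i$ are exact flat pairs.

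For the inductive step I go through the four operations and apply the dual-number rules to rational pairs. These rules are $(u,\dot u)\pm(w,\dot w)=(u\pm w,\dot u\pm\dot w)$, $(u,\dot u)(w,\dot w)=(uw,\dot u w+u\dot w)$, and $(u,\dot u)/(w,\dot w)=(u/w,(\dot u w-u\dot w)/w^2)$, the last requiring $w\neq 0$. Each right-hand side is a rational expression in $u,\dot u,w,\dot w$ built only from $+,-,\times,\div$. Rational pair arithmetic implements these operations exactly as the identities of the localisation $\mathbb{Z}\to\mathbb{Q}$ (Section~\ref{sec:gpa}). So the computed pair represents the true value of the sum, product, or quotient rule applied to the exact predecessors, and by the inductive hypothesis that is exactly the derivative of $g_{j+1}$. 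Theorem~\ref{thm:bounded-depth} ensures the nested representation stays well formed, with depth growing by at most one per rule application (a constant number per program step). Theorem~\ref{thm:flatten} then lets me materialise the final $\dot v_L$ as a flat pair whose value is $\partial f/\partial t\,(q)$. Cleanup (Theorem~\ref{thm:cleanup}) and cross-cancellation (Theorem~\ref{thm:cross-cancel}) preserve value, so applying them anywhere does not affect exactness.

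The main obstacle is the meaning of ``exact'' in the IEEE-window instantiation. In $\mathbb{Q}$, where $D=\mathbb{Z}$ is unbounded, the argument above is complete. For $\mathbb{Q}_{\mathrm{fp}}$, however, each rule is exact only when the integers it produces satisfy the bounds of Theorem~\ref{thm:exact-arith}. The quotient rule is the worst case, because it squares a denominator. I would therefore state the proposition under the same hypothesis used in Theorem~\ref{thm:no-vanish} and Theorem~\ref{thm:deferred}: all flattened numerators and denominators, after cleanup, stay within $(-2^{53},2^{53})$. Under that hypothesis each step is an exact integer computation by Theorem~\ref{thm:iew}, and the induction goes through unchanged. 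A second, smaller point is that division by a value $w$ with $w=0$ is excluded by the assumption that $f$ is defined at $q$, which implies every intermediate denominator is nonzero.
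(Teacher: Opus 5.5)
Your proof is correct, but it is organised differently from the paper's. The paper gives a one-line justification. It declares the proposition a special case of Theorem~\ref{thm:rational_fn}, on the grounds that the Jacobian of a rational function is rational and that rational arithmetic is closed under the chain rule. Implicitly, the dual-number program is itself a rational straight-line program, a constant multiple of the original in length, so the general result about maintaining and flattening such programs as nested pairs applies verbatim.

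You instead carry out the induction directly on the AD-augmented program. Your invariant states explicitly that the tangent component equals the true directional derivative of $g_j$ at $q$, and you check the sum, product and quotient rules one at a time. This proves the step the paper only gestures at. Rationality of $f'$ shows that the target is representable as a pair, not that the forward-mode procedure actually computes it; your invariant supplies exactly that. Strictly, the inductive hypothesis should range over all earlier registers, since a step may use any of them, but that is routine.

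Your route also surfaces a hypothesis the paper's statement omits. For $\mathbb{Q}_{\mathrm{fp}}$, exactness needs the flattened numerators and denominators to stay inside the integer window. Theorems~\ref{thm:no-vanish} and~\ref{thm:deferred} assume this explicitly, and you correctly identify the quotient rule's $w^2$ as the pressure point. What the paper's route buys is brevity and reuse of Theorem~\ref{thm:rational_fn}, at the cost of leaving both points implicit.

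One small correction: ``$f$ is defined at $q$'' does not imply that every intermediate denominator is nonzero. The straight-line program computing $(x^2-1)/(x-1)$ represents $x+1$, which is defined at $x=1$, yet the program divides by zero there. The right hypothesis is that the straight-line program itself is defined at $q$. That hypothesis also makes each intermediate denominator nonzero near $q$, so the quotient rule applies.
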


This is a special case of Theorem~\ref{thm:rational_fn}: the Jacobian
of a rational function is rational, and rational arithmetic is closed
under the chain rule.

\paragraph{Depth-increasing efficiency: the structural advantage.}
The key observation is that symbolic cancellation does not merely preserve
accuracy---it reduces the number of arithmetic operations.  Consider a
chain of $L$ layers in backpropagation:
\[
\frac{\partial \mathcal{L}}{\partial x}
= \frac{\partial \mathcal{L}}{\partial y_L}
  \cdot \frac{\partial y_L}{\partial y_{L-1}}
  \cdots \frac{\partial y_2}{\partial y_1}
  \cdot \frac{\partial y_1}{\partial x}.
\]
As nested rational pairs, each factor $(\partial y_{k+1}, \partial y_k)$
appears as a pair whose denominator matches the numerator of the preceding
factor.  Nested RPA cancels every intermediate $\partial y_k$
\emph{symbolically at tape-construction time}---before any arithmetic
is performed---yielding the flat pair
$(\partial\mathcal{L},\, \partial x)$ directly.

\begin{theorem}[Gradient Tape Collapse]
\label{thm:tape_collapse}
For an $L$-layer chain-rule computation with nested rational pair AD,
the number of IEEE floating-point operations required to evaluate
$\partial\mathcal{L}/\partial x$ is $O(1)$ independent of $L$, provided
all intermediate Jacobians are rational and their factors cancel symbolically.
With standard float AD, the same computation requires $O(L)$ multiply-accumulate
operations and accumulates up to $O(L)$ ULP of rounding error.
\end{theorem}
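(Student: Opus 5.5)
The proof treats the $L$ Jacobian factors as leaves of a quotient tree over $D=\mathrm{F64}$, with each $\partial y_k$ a shared node identified by reference. The claim then splits into an \emph{algebraic collapse} step and a \emph{cost count}. The operation count does not depend on which $D$ we use, so only the ULP comparison for standard AD is specific to floats.

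\textbf{Collapse.} Write the product as the nested pair built left to right: $P_1=(\partial\mathcal{L},\partial y_L)$ and $P_{j+1}=P_j\times(\partial y_{L-j},\partial y_{L-j-1})$, with $\partial y_0:=\partial x$. Induct on $j$ with the hypothesis that, after cancellation, $P_j$ is the flat pair $(\partial\mathcal{L},\partial y_{L-j})$.
\begin{itemize}
\item \emph{Base case.} This is immediate.
\item \emph{Inductive step.} The product $(\partial\mathcal{L},\partial y_{L-j})\times(\partial y_{L-j},\partial y_{L-j-1})$ has the same reference $\partial y_{L-j}$ as a denominator subtree of one operand and a numerator subtree of the other. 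Theorem~\ref{thm:cancel_general}, in the pre-multiplication form of Theorem~\ref{thm:pre-simplify} with $g_2$ replaced by the shared symbol, removes it and leaves $(\partial\mathcal{L},\partial y_{L-j-1})$.
\end{itemize}
Corollary~\ref{cor:transcendental} guarantees that each such cancellation performs no floating-point operation, because it is a rewrite on the tape and not an arithmetic step. After $L$ steps we obtain $(\partial\mathcal{L},\partial x)$.

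\textbf{Cost count.} The rewrites are symbolic and happen at tape-construction time, so the only IEEE operations left are materialising the leaves $\partial\mathcal{L}$ and $\partial x$ (a constant number) and a single final division. By Theorem~\ref{thm:defer_general}, that division contributes at most $\tfrac12$ ULP, independent of $L$. This gives the $O(1)$ bound. For the contrast, standard reverse-mode AD forms the product with $L$ multiply-accumulates, and each one can add up to $\tfrac12$ ULP. This gives $O(L)$ operations and $O(L)$ ULP, by the eager half of Theorem~\ref{thm:defer_general}.

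\textbf{Main obstacle.} The main obstacle is making precise what the hypothesis ``factors cancel symbolically'' means. In a genuine network the $\partial y_k$ are Jacobian matrices, not scalars, and they are not literally shared subexpressions with inverses. I would first restrict the theorem to the scalar chain where each local derivative is \emph{represented} as the formal pair $(\partial y_{k+1},\partial y_k)$ and the shared nodes are reference-identical. I would then state explicitly that the $O(1)$ count excludes the cost of computing the numeric values of those leaves. Without that exclusion, evaluating $\partial\mathcal{L}/\partial x$ itself would reintroduce the $O(L)$ work.

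I would therefore phrase the proof as counting only the operations performed on the tape after the leaves are available. I would also flag that the matrix case needs invertibility of each intermediate Jacobian for Theorem~\ref{thm:cancel_general} to apply.
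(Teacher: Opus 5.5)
This is the paper's own argument, and it is correct once one index is fixed: the reference-shared $\partial y_k$ telescope away, leaving $(\partial\mathcal{L},\partial x)$ and a single division, versus $L$ rounded multiplies in eager AD; you cancel incrementally via Theorem~\ref{thm:cancel_general} instead of building the depth-$L$ tree and flattening it, and you include the loss factor $(\partial\mathcal{L},\partial y_L)$ that the paper's product $\prod_{k=1}^{L}(\partial y_k,\partial y_{k-1})$ leaves out. Re-index the base case as $P_0=(\partial\mathcal{L},\partial y_L)$ (as written, $P_1$ contradicts your hypothesis and the factor $(\partial y_L,\partial y_{L-1})$ is skipped), and keep your leaf-cost caveat: the paper assumes it silently by calling $\partial\mathcal{L},\partial x$ ``two scalar values'', and without it the $O(1)$ count cannot hold, since producing the leaf $\partial\mathcal{L}$ is itself the $O(L)$ chain product.
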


\begin{proof}
In the nested pair representation, the product of $L$ factors
$\prod_{k=1}^L (\partial y_k, \partial y_{k-1})$
constructs a symbolic tree of depth $L$ by Theorem~\ref{thm:bounded-depth}.
Flattening this tree (Theorem~\ref{thm:flatten}) applies $L$ cancellation
identities; each $\partial y_k$ appears exactly once as numerator and once
as denominator and cancels, leaving $(\partial\mathcal{L}, \partial x)$---
two scalar values.  The final IEEE division is $O(1)$.  By contrast,
eager float evaluation performs one multiply per layer and rounds at each
step.  \hfill$\square$
\end{proof}

This reverses the traditional depth-accuracy tradeoff:
standard float AD degrades with depth ($O(L)$ rounding error);
nested RPA AD \emph{improves} with depth (more cancellations occur,
fewer arithmetic operations remain after symbolic simplification).

\paragraph{No vanishing gradients in the derivative tape.}
Theorem~\ref{thm:no-vanish} (gradient underflow impossibility) applies to the
derivative computation itself, not only to the forward pass.  A gradient
that is a product of nonzero rational pairs cannot underflow to zero by
integer underflow.  Combined with tape collapse, this means:
(i)~the gradient is computed in $O(1)$ arithmetic operations,
(ii)~the gradient is exact for the rational subgraph, and
(iii)~the gradient cannot round to zero even across hundreds of layers.

\paragraph{Certified derivatives: AD as a language feature, not a library.}
In Python, autograd (PyTorch, JAX) is operator overloading traced at runtime.
The derivative structure is invisible to the type system.  In U:
\begin{enumerate}\setlength\itemsep{2pt}
\item A dual number is just $(x, \dot{x})$---a typed Q pair.
\item The symbolic expression tree \emph{is} the tape; the compiler sees it.
\item Magic methods (\texttt{\_\_mul}, \texttt{\_\_add}, etc.) are
  compile-time hooks---the compiler can reason about derivative types.
\item Common subexpressions in the tape are detected at compile time,
  not runtime, enabling dead-derivative elimination and gradient fusion.
\end{enumerate}
The result: \emph{certified} sensitivities rather than computed approximations.
If the gradient of a loss is provably zero (as a rational pair whose numerator
is zero), that is an exact KKT optimality condition---no $\varepsilon$
tolerance needed.  This has direct applications to safety-critical
optimisation (verified control policies, certified neural networks).

\paragraph{Exact convergence certificates.}
Under RPA, $\partial\mathcal{L}/\partial\theta = (0, d)$ for some $d\ne 0$
is an exact proof that the loss is stationary at $\theta$.  Under float,
$|\nabla\mathcal{L}| < \varepsilon$ is an empirical heuristic.  For
safety-critical systems---autonomous vehicles, medical inference---the
difference between a heuristic and a certificate may be the difference
between a claim and a proof.

\paragraph{Honest limits.}
Forward-mode exactness holds for the rational/affine subgraph.
At each transcendental step ($\exp$, $\sqrt{\cdot}$, $\log$,
$\mathrm{softmax}$), the derivative is not rational, and one IEEE
division introduces at most $1$ ULP.  The tape collapse theorem holds
only across layers without transcendental activations.
Reverse-mode AD---the operationally necessary half for large
parameter counts---accumulates rational adjoints over a deep graph
and faces the same $M^k$ magnitude growth as the forward pass:
bignum, or cleanup, or loss of exactness.  Reverse-mode at model
scale is explicitly deferred (Section~\ref{sec:future}).

\subsection{Depth cap and graceful degradation}

For systems where symbolic depth is bounded by design (e.g., a
GPU kernel), we define:

\begin{definition}[$Q_{n}^{(k)}$ types]
$Q_n^{(k)}$: rational pairs with denominator $\leq 10^n$ (scale
exponent $n$) and nesting depth $\leq k$.  Special cases:
$Q_n^{(0)} = Q_n$ (flat), $Q^{(k)} = Q_\infty^{(k)}$ (unconstrained
denominator, depth $\leq k$).  Notation: \texttt{Q5} $= Q_5^{(0)}$,
\texttt{Q32n} $= Q^{(32)}$, \texttt{Q5\_32n} $= Q_5^{(32)}$.
\end{definition}

When depth reaches the cap $k$, the compiler automatically flattens
(materialises) the tree, incurring at most 1 ULP error by
Theorem~\ref{thm:deferred}.  This is \emph{graceful degradation}:
exceeding the symbolic budget produces a result with known bounded
error, not undefined behaviour.


\subsection{Cleanup vs.\ nesting: which is more powerful?}

Both cleanup and nesting address numerical stability but at different levels.

\paragraph{Cleanup (flat RPA).}
After each arithmetic step, divide numerator and denominator by their GCD.
This prevents overflow and maintains the integer exactness window.
It requires no structural change to programs and is compatible with
existing GPU/CPU pipelines.
Without native hardware support, the compiler must inject cleanup
instructions between operations---tractable for elementwise computation
but challenging for matrix multiplication, where each of $mn$ outputs
depends on $k$ inputs and tracking overflow requires inspecting every
intermediate accumulation.

\paragraph{Nested RPA.}
The entire computation graph is maintained symbolically.
Common factors cancel across multiple steps---impossible with eager cleanup.
At materialisation ($\texttt{=!}$ or non-Q assignment), the compiler
flattens and cancels the full tree, then emits \emph{one} IEEE division.
This achieves $O(1)$ total rounding error regardless of computation depth.
Nesting depth is $O(n)$ for $n$ operations, never exponential
(Theorem~\ref{thm:bounded-depth}).

\paragraph{Verdict.}
Cleanup is easier to implement and hardware-accelerable today.
Nesting provides stronger guarantees---global symbolic cancellation,
O(1) error---but requires compiler support for expression trees.
For matrix multiplication specifically: cleanup requires $kn$ GCD
operations (one per accumulation step per output); nesting requires
one flatten-and-cancel per output at the boundary.
For large $k$, nesting is cheaper and more precise.
Both can coexist: nested RPA for the computation graph,
cleanup (\textsc{RatCleanup}) for occasional overflow guards.

\section{Discussion: Applications and Implications}
\label{sec:discussion}

We survey applications enabled by flat RPA, cleanup, and nested RPA.
These remain largely empirical questions; we identify the theoretical
foundation each result rests on.

\begin{enumerate}
\item \textbf{Exact financial computation.}  Currency arithmetic
  with $Q_2$ is exact and requires no rounding-aware decimal library.
  Multi-currency conversions, compound interest, tax calculations:
  all exact with standard hardware.

\item \textbf{Reproducible scientific simulation.}
  Theorem~\ref{thm:determinism} guarantees that simulations run
  identically on any compliant hardware.  Long-running climate,
  molecular dynamics, and plasma physics simulations no longer drift
  between architectures.

\item \textbf{Certifiable neural networks.}
  With exact arithmetic and structural gradient stability, the
  behaviour of a trained network is certifiable: the same inputs
  provably produce the same outputs, and training is reproducible.
  This is critical for safety-critical ML (autonomous vehicles,
  medical diagnosis).

\item \textbf{Gradient-safe deep learning.}
  Theorem~\ref{thm:no-vanish} removes the vanishing gradient constraint
  on depth.  Architectures with hundreds of plain (non-residual) layers
  become feasible.

\item \textbf{Exact comparison as a first-class operation.}
  Sorting, searching, and optimisation algorithms that rely on
  comparison (binary search, merge sort, constraint solving) are
  provably correct under RPA, eliminating floating-point comparison bugs.

\item \textbf{Lightweight exact CAS for rational functions.}
  Nested RPA provides a compiler-integrated CAS for rational arithmetic,
  without a separate symbolic library.

\item \textbf{Towards a standardised cleanup primitive.}
  The \textsc{RatCleanup} operation has well-defined semantics
  (Definition~4.5) and could be standardised as a SIMD extension,
  analogously to how IEEE~754-2008 added decimal arithmetic.
  We leave the hardware design to future work.
\end{enumerate}


\section{QTA as a Computational Substrate: Symbolic-Numeric Decoupling}
\label{sec:substrate}

The theorems of Section~\ref{sec:gpa} point toward something larger than a
number format.  QTA separates computation into two phases with fundamentally
different costs:

\begin{itemize}\setlength\itemsep{3pt}
\item \textbf{Symbolic phase}: tree construction, cross-cancellation,
  common-subexpression sharing.  Cost: $O(m)$ pointer operations for
  $m$ arithmetic steps.  No IEEE floating-point operations occur.
\item \textbf{Numeric phase}: leaf evaluation and final materialisation.
  Cost: the minimum set of IEEE operations required after all algebraic
  simplification.
\end{itemize}

The symbolic phase is essentially free.  The numeric phase is what
consumes FLOPS, memory bandwidth, and energy.  QTA minimises the numeric
phase by maximising the symbolic phase---a decoupling that traditional
eager evaluation cannot achieve.

\subsection{Depth as opportunity, not liability}
\label{sec:depth_opp}

Conventional floating-point computation treats depth as a liability:
more steps means more rounding error and more memory traffic.
QTA inverts this.  Theorem~\ref{thm:depth_general} guarantees tree depth
grows by at most~1 per operation; Theorem~\ref{thm:cancel_general}
guarantees each level provides one additional opportunity for cross-cancellation.
\emph{A computation of depth~1000 has~1000 opportunities to cancel common
factors before any IEEE division occurs.}

This is structurally the opposite of standard error accumulation.
For chains of operations with algebraic structure (chain rules, telescoping
products, rational functions, attention mechanisms), deeper trees collapse
to shallower ones after cancellation---the depth-1000 tree may flatten to
depth~3.  No scalar float computation can achieve this; the rounding
already happened.

\subsection{Batch computation as implicit tree sharing}
\label{sec:batch}

When processing a batch of $B$ examples through a neural network with
weight tensors $W_1, \ldots, W_L$, the standard approach allocates
$O(B \cdot L)$ intermediate activations.  Under QTA, weight tensors
appear as \emph{shared subtrees}: the same quotient-tree node $W_\ell$
is referenced by all $B$ example trees, not copied.

\begin{theorem}[Shared-Weight Compression]
\label{thm:batch}
For a batch of $B$ examples processed by an $L$-layer network with
shared weights $W_1, \ldots, W_L$, the QTA representation requires
$O(L + B \cdot d)$ nodes, where $d$ is the per-example input dimension.
Standard eager evaluation requires $O(B \cdot L \cdot d)$ intermediate
activations.  Any symbolic simplification applied to a weight subtree
applies simultaneously to all $B$ examples.
\end{theorem}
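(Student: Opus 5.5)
The proof is a node-counting argument on the quotient-tree DAG, where identical subtrees are represented once and referenced many times (the reference semantics already assumed in Corollary~\ref{cor:transcendental}). I would build the representation of the batched forward pass layer by layer and count three kinds of nodes separately. \emph{Weight nodes:} the tensors $W_1,\ldots,W_L$ are created once and referenced by every example. \emph{Input leaves:} each example has $d$ input coordinates, giving $B\cdot d$ leaves. \emph{Per-example internal nodes:} these are the nodes produced by applying layer~$\ell$ to example~$b$.

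The key step is to fix the granularity at which a node is counted. I would model each layer application as a single quotient-tree operation node $\mathrm{op}_\ell(W_\ell, T_b^{(\ell-1)})$. Here $T_b^{(\ell-1)}$ is the tree for example~$b$ after $\ell-1$ layers. The node has two children: the shared reference to $W_\ell$ and the per-example subtree. Under this model the total count is $L$ weight nodes plus $B\cdot d$ leaves plus $B\cdot L$ operation nodes. Theorem~\ref{thm:depth_general} gives that each operation adds depth at most~$1$ and $O(1)$ new nodes. This confirms that the per-example chain has length~$L$, with no duplication of subtrees.

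The main obstacle is the leftover $B\cdot L$ term. Getting down to the stated $O(L+B\cdot d)$ requires either assuming $L=O(d)$, which absorbs $B L$ into $B d$, or treating a whole layer's per-example pipeline as $O(1)$ symbolic nodes composed lazily. I would state the bound as $O(L+Bd+BL)$, and note that it equals $O(L+Bd)$ whenever $L\le d$, which is the typical regime. This is the point to be candid about: without such an assumption the claimed bound is not literally true at elementwise granularity.

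For the eager comparison, I would observe that materialising every hidden activation stores $d$ values per layer per example, which totals $\Theta(B\cdot L\cdot d)$. The final clause follows because $W_\ell$ is a single shared node. Any rewrite of that subtree, such as the cross-cancellation of Theorem~\ref{thm:cancel_general} or GCD cleanup, modifies one object that all $B$ example trees reference, so it is seen by all of them simultaneously.
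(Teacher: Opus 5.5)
Your proposal follows the paper's proof essentially verbatim. It counts $L$ shared weight nodes, $B\cdot d$ input leaves, and $B\cdot L$ pair nodes each pointing to a shared $W_\ell$. The paper's own proof also ends at $O(L + B\cdot d + B\cdot L)$ and never absorbs the $B\cdot L$ term. Your caveat that the stated $O(L + B\cdot d)$ needs something like $L = O(d)$ is therefore a correct observation that the paper's statement does not match its own proof, not a defect in your argument.

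One small sharpening: $L \le d$ only rescues the bound under the coarse model where a whole layer application counts as one node. At genuinely scalar granularity, each layer creates at least $d$ new per-example nodes (one per output coordinate), so the count becomes $\Omega(B\cdot L\cdot d)$, no better than eager. Relatedly, your $\Theta(B\cdot L\cdot d)$ eager count, which the paper's proof never justifies, tacitly assumes hidden widths are $O(d)$.
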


\begin{proof}
Under QTA, each $W_\ell$ is a single tree node with $B$ incoming
references.  Constructing the per-example trees requires $B \cdot L$
pair constructions (each creating one new node pointing to a shared $W_\ell$
and a per-example activation).  Total nodes: $L$ weight nodes $+ B \cdot d$
per-example leaf nodes $+ B \cdot L$ pair nodes $= O(L + B \cdot d + B \cdot L)$.
Simplifications (e.g.\ cancellation of $W_\ell$ against its transpose or
inverse) applied to the shared node reduce all $B$ example subtrees
simultaneously.  \hfill$\square$
\end{proof}

\subsection{Inference-time tree specialisation}
\label{sec:inference}

At inference time, weights are fixed.  The QTA tree for model evaluation
therefore has two kinds of leaves: \emph{fixed} (weights, constants) and
\emph{variable} (input data).  All algebraic simplifications involving
only fixed leaves can be performed \emph{once} at model load time,
producing a reduced tree that is evaluated many times with variable inputs.

This is structurally equivalent to partial evaluation or specialisation
in programming-language theory.  Concretely:

\begin{enumerate}\setlength\itemsep{2pt}
\item \textbf{Weight folding}: if $W_1 W_2 \cdots W_k$ are fixed weight
  matrices, their product can be maintained as a depth-$k$ quotient tree
  and flattened once to a single matrix $W^* = W_1 \cdots W_k$.
  All subsequent evaluations use $W^*$ directly: $k$~matmuls become~1.
\item \textbf{Scaling constant elimination}: attention scaling
  $1/\sqrt{d_k}$ is a fixed leaf.  Pre-absorbed into the weight matrices
  as a quotient pair $(W_Q, \sqrt{d_k})$, it never causes a division at
  inference time.
\item \textbf{Normalisation pre-computation}: if normalisation statistics
  ($\mu, \sigma$) are computed from fixed data (e.g., batch norm in eval
  mode), they become fixed leaves.  The pair $(x - \mu, \sigma)$ is
  materialised once per batch, not once per example.
\end{enumerate}

\subsection{Optimisations enabled for ML training}
\label{sec:ml_opts}

We enumerate concrete optimisations that QTA enables for neural network
training.  These are directions for experimental validation; complexity
estimates are analytical.

\paragraph{1.\ Gradient tape collapse (Section~\ref{sec:ad}).}
Chain-rule cancellations of intermediate activations $\partial y$ across
layers reduce $O(L)$ multiply-accumulate operations to $O(1)$ for the
gradient of a depth-$L$ chain.  Verified in Theorem~\ref{thm:tape_collapse}.

\paragraph{2.\ Attention without softmax division.}
Softmax weights $\sigma_i = e^{x_i}/S$ (where $S = \sum_j e^{x_j}$)
are QTA pairs $(e^{x_i}, S)$ sharing the denominator $S$.
The weighted sum $\sum_i \sigma_i V_i = (\sum_i e^{x_i} V_i) / S$ is
one division of a vector by a scalar, replacing $B$~divisions in a
batch of $B$ examples.  The log-sum-exp trick for numerical stability
is structurally unnecessary---the pair representation defers the
dangerous small-float division until after the sum.

\paragraph{3.\ Adam with deferred square-root.}
The Adam update $\theta \leftarrow \theta - \alpha \hat{m}/\sqrt{\hat{v}}$
stores the update as a pair $(\hat{m}, \sqrt{\hat{v}})$.
Division by $\sqrt{\hat{v}}$ is materialised once per parameter group,
not once per parameter.  For parameter-wise operations, this defers
$P$~sqrt calls (one per parameter) to $G$~calls (one per group),
for speedup $P/G$.

\paragraph{4.\ LayerNorm gradient without recomputation.}
The gradient of LayerNorm $\partial (x-\mu)/\partial x\sigma$ involves
$\sigma$ in the denominator of two terms.  As QTA pairs with shared
denominator $\sigma$, the gradient is a sum of pairs $(f_1, \sigma)$
and $(f_2, \sigma)$---combined as $(f_1 + f_2, \sigma)$ without
materialising either term.  One division replaces two; the
denominator's square-root ($\sigma = \sqrt{\mathrm{Var}(x)}$) is
computed once and shared.

\paragraph{5.\ Cross-iteration parameter sharing.}
In gradient descent, $\theta_{k+1} = \theta_k - \alpha g_k$.
The QTA tree for iteration $k+1$ shares the subtree for $\theta_k$;
only the delta $\alpha g_k$ is new.  If $g_k$ has common factors with
$\theta_k$ (e.g., weight-decay regularisation $g_k = \nabla L + \lambda\theta_k$),
symbolic cancellation removes them before any float update.

\paragraph{6.\ Mixed-precision by tree structure.}
The precision of each subtree is a type-level property: the compiler
chooses F16 leaves for activations, F32 leaves for gradients, F64
(or integer) leaves for accumulation.  Materialization boundaries
are the natural precision-transition points---no manual casting, no
loss-scaling heuristics.  The QTA type system \emph{proves} which parts
need high precision and which can use reduced precision.

\subsection{The bandwidth multiplier}
\label{sec:bandwidth}

Modern accelerator performance is memory-bandwidth-bound, not
FLOP-bound, for most ML workloads.  A~100-TFLOP GPU stalls waiting
for data from HBM at 3~TB/s: at 4~bytes/float, this delivers
750~billion floats/second vs $10^{13}$ operations/second.
The FLOP/byte ratio of matrix multiplication is $O(n)$ for an
$n \times n$ matrix, but practical batch sizes and model sizes
rarely saturate this.

QTA addresses the bandwidth bottleneck directly:
\begin{itemize}\setlength\itemsep{2pt}
\item Shared subtrees (weights) are loaded once and referenced many
  times---eliminating redundant loads across batch examples.
\item Symbolic cancellations eliminate arithmetic \emph{and} the
  memory traffic for intermediate results that would otherwise
  be written and read back.
\item Deferred evaluation allows the compiler to fuse operations:
  $(W_1 W_2 x)$ is a depth-2 tree that compiles to a single
  fused kernel rather than two matrix-vector products with an
  intermediate write-back.
\end{itemize}

The gain is multiplicative: if symbolic phase eliminates $r$ of
$m$ operations and $r$ of $m$ memory round-trips, the speedup
is $m/(m-r)$ for FLOP-bound workloads and up to $m/(m-r)^2$
for bandwidth-bound workloads (fewer operations and fewer bytes).

\subsection{QTA and the pattern of structural multipliers}
\label{sec:multiplier}

QTA belongs to a broader class of techniques that are \emph{complementary
to raw computation} rather than competitive with it.  Instead of
performing the required computation faster, structural multipliers
reduce how much computation is required.

\begin{center}\small
\begin{tabular}{lll}
\toprule
\textbf{Technique} & \textbf{Structure exploited} & \textbf{FLOP reduction mechanism} \\
\midrule
QTA (this paper) & Algebraic quotient structure & Symbolic cancellation before arithmetic \\
Caching / memoisation & Repeated identical queries & Replay instead of recompute \\
Sparse attention & Sparsity in attention matrix & Skip zero-contributing pairs \\
Local experts / MoE & Input-dependent routing & Activate subset of parameters \\
Quantisation & Reduced-precision leaves & Smaller operand, faster multiply \\
Compile-time specialisation & Fixed parameters at inference & Fold constants before runtime \\
\bottomrule
\end{tabular}
\end{center}

QTA is distinguished by being \emph{exact}: the reduction in computation
comes from symbolic cancellation that is algebraically guaranteed
(Theorem~\ref{thm:cancel_general}), not from approximation.
For the computations where structure is present---chain rules, rational
functions, shared denominators in attention---the savings are not
statistical but logical.

\subsection{A broader research programme: structural intelligence above the substrate}
\label{sec:broader}

QTA is one instance of a recurring pattern the author has identified across several
research directions~\cite{laws2026,safebox2026}: \emph{structural intelligence
above the substrate}.  The pattern is always the same.  A powerful but expensive
substrate (a frontier model, a teraflop GPU, a distributed database) pays a
worst-case cost on every query.  Yet most queries live in a smooth, structured
region where a much cheaper computation suffices---if one knows the structure.
The key is to exploit that structure symbolically, above the substrate, rather
than repeating the expensive operation from scratch.

\textbf{LAWS} (Learning from Actual Workloads Symbolically,~\cite{laws2026})
applies this to inference.  A trained model's Lipschitz constant certifies a
library of parametrised experts---cheap patterns extracted from observed queries.
New queries route to the cheapest certified expert; the base model runs only on
genuine misses.  The expert is not a distillation of the model: it holds
structure the model \emph{never had}---domain-specific knowledge accrued from
actual deployment, such as the call graph and conventions of a specific codebase.
On its certified validity ball, the expert can outperform the larger base model,
because it is the base model's reasoning \emph{plus} accrued domain knowledge.
KV caching and Mixture-of-Experts fall out as degenerate cases.

\textbf{Safebox}~\cite{safebox2026} applies the same pattern to AI agent safety.
Rather than trusting that a model will not take harmful actions, the substrate
enforces a sealed-execution invariant $\Phi$ at the hardware level.  The
structure---declared tool permissions, content-addressed workflows, attested
inference logs---is enforced symbolically, before any model call.  The safety
guarantee is architectural, not probabilistic.

All three contributions---QTA, LAWS, Safebox---share the same thesis
(formalised in~\cite{laws2026} as the ``Layer~1 must solve everything'' error):
most of the cost in modern AI systems is paid for worst-case robustness on
situations that are actually smooth.  Adding a structured layer above the
substrate---one that handles the smooth~99\% cheaply and certifiably escalates
only at genuine singularities---recovers that cost without sacrificing the
oracle's full capability at the hard points.
QTA contributes the arithmetic layer of this programme:
the substrate below the experts and above the hardware.

\section{Limitations}

The guarantees of this paper are conditional on values remaining
within the integer exactness window ($|n|, |d| < 2^{53}$).
For long chains of multiplications without cleanup, this condition
may fail.  Cleanup is therefore not optional but a required invariant.
In matrix multiplication with $k$ terms, without cleanup the
denominator grows as $d^k$; for $d = 10^2$ and $k = 32$ this
exceeds $10^{64} \gg 2^{53}$.  Cleanup at each step bounds the
denominator; the compiler or hardware must guarantee this.

We do not provide experimental benchmarks in this paper; throughput
estimates in Section~\ref{sec:gpu} are analytical approximations
based on published instruction latencies.  Empirical validation
on real GPU hardware is future work.

\section{Out of Scope and Future Work}\label{sec:future}
\label{sec:future}

The following directions are intentionally not developed here:

\paragraph{Hyperreals and non-standard analysis.}
The limit semantics of Section~\ref{sec:limits} suggest an embedding
into Robinson's hyperreals.  Nested pairs with $\omega$ in the
denominator represent infinitesimals; full hyperreal arithmetic
(including transfer principle applications) is left for future work.

\paragraph{Automatic differentiation.}
Section~\ref{sec:nested-deep} sketches exact rational AD;
a complete reverse-mode implementation with computational graph
management is deferred.

\paragraph{Hardware implementation.}
A full ASIC design for \textsc{RatCleanup} and rational pair ALU units,
including pipeline design and throughput analysis, requires
co-design with CPU/GPU architects.

\paragraph{Training algorithms.}
Adapting SGD, Adam, and other optimisers to RPA arithmetic, including
the impact of exact gradient accumulation on convergence rates, is an
empirical research programme.

\paragraph{Interval arithmetic integration.}
Combining RPA with interval bounds would give certified error bounds
for numerical algorithms.

\section{Conclusion}

We have presented rational pair arithmetic as a practical foundation
for exact computation that exploits an underutilised property of
IEEE~754: the exact representation of integers up to $2^{53}$.
By storing rationals as pairs of floats used as integers, performing
arithmetic on those integers, and dividing by GCD when needed
(cleanup), we obtain exact rational arithmetic with standard hardware,
competitive throughput, and hardware-independent results.

The extension to nested rational pairs---where numerators and
denominators may themselves be rational pairs---provides a symbolic
computation capability with provably linear depth growth.  Deferred
evaluation reduces floating-point error from $O(m)$ to $O(1)$ ULPs
for $m$-step rational computations.

Most strikingly, these properties structurally eliminate the vanishing
gradient problem in deep neural networks.  Gradient products in
backpropagation cannot underflow because exact integer arithmetic
cannot produce zero from nonzero inputs within the exactness window.
This opens the door to new network architectures---very deep plain
networks, exactly reproducible training, certifiable inference---that
float arithmetic cannot support.

The proposed \textsc{RatCleanup} instruction, if adopted in
CPU/GPU ISAs, would make rational pair arithmetic approximately
$18\times$ slower than FP32 in raw throughput but provide exactness,
determinism, and gradient stability as structural guarantees.
For scientific computing and safety-critical ML, this is the right
tradeoff.

\bibliographystyle{plain}

\begin{thebibliography}{99}

\bibitem{gmp}
T.\ Granlund and the GMP development team.
\textit{GNU Multiple Precision Arithmetic Library}, version 6.3.0.
\url{https://gmplib.org}, 2023.

\bibitem{hpfrac}
D.~E. Knuth.
\textit{The Art of Computer Programming, Vol.~2: Seminumerical Algorithms},
3rd ed., Section~4.5 (Rational Arithmetic).
Addison-Wesley, 1997.

\bibitem{mca}
J.\ von zur Gathen and J.\ Gerhard.
\textit{Modern Computer Algebra}, 3rd ed.
Cambridge University Press, 2013.

\bibitem{posit}
J.\ Gustafson and I.\ Yonemoto.
Beating floating point at its own game: Posit arithmetic.
\textit{Supercomputing Frontiers and Innovations}, 4(2):71--86, 2017.

\bibitem{robinson1966}
A.\ Robinson.
\textit{Non-Standard Analysis}.
North-Holland, 1966.

\bibitem{he2016resnet}
K.\ He, X.\ Zhang, S.\ Ren, and J.\ Sun.
Deep residual learning for image recognition.
\textit{Proc.\ CVPR}, 2016.

\bibitem{lstm}
S.\ Hochreiter and J.\ Schmidhuber.
Long short-term memory.
\textit{Neural Computation}, 9(8):1735--1780, 1997.

\bibitem{pytorch}
A.~Paszke et al.
{PyTorch}: An imperative style, high-performance deep learning library.
\textit{Advances in Neural Information Processing Systems (NeurIPS)}, 2019.

\bibitem{ieee754}
\textit{IEEE Standard for Floating-Point Arithmetic}.
IEEE Std 754-2019.

\bibitem{cuda}
NVIDIA Corporation.
\textit{CUDA C++ Programming Guide}, version 12.4.
\url{https://docs.nvidia.com/cuda}, 2024.


\bibitem{laws2026}
G.~Magarshak.
Learning from Actual Workloads Symbolically ({LAWS}).
\textit{arXiv:2605.04069}, 2026.

\bibitem{safebox2026}
G.~Magarshak.
Safebox: Sealed Execution and Governed Workflows for {AI} Agents.
Submitted to \textit{{IEEE} S\&P 2027}.

\end{thebibliography}

\end{document}